\newtheorem{theorem}{Theorem}
\newtheorem{definition}{Definition}
\newtheorem{lemma}{Lemma}
\newtheorem{corollary}{Corollary}
\begin{document}
\title{Balanced Nonadaptive Redundancy Scheduling}
\author{Amir Behrouzi-Far,
        and~Emina Soljanin,~\IEEEmembership{IEEE Fellow}%
\thanks{This work was supported by NSF Award under Grant CIF-1717314 and CCF-1559855. This work was presented in part in the 2019 57'th Annual Allerton Conference on Communication, Control and Computing \cite{behrouzi2019redundancy} and 2019 IEEE international conference on Big Data \cite{behrouzi2019scheduling}.

The authors are with the Department of Electrical and Computer Engineering, Rutgers University, New Brunswick, NJ 08901 USA (e-mail: amir.behrouzifar@rutgers.edu; emina.soljanin@rutgers.edu).}
}

\maketitle

\begin{abstract}
Distributed computing systems implement redundancy to reduce the job completion time and variability. Despite a large body of work about computing redundancy, the analytical performance evaluation of redundancy techniques in queuing systems is still an open problem. In this work, we take one step forward to analyze the performance of scheduling policies in systems with redundancy. In particular, we study the pattern of shared servers among replicas of different jobs. To this end, we employ combinatorics and graph theory and define and derive performance indicators using the statistics of the overlaps. We consider two classical nonadaptive scheduling policies: random and round-robin. We then propose a scheduling policy based on combinatorial block designs. Compared with conventional scheduling, the proposed scheduling improves the performance indicators. We study the expansion property of the graphs associated with round-robin and block design-based policies. It turns out the superior performance of the block design-based policy results from better expansion properties of its associated graph. As indicated by the performance indicators, the simulation results show that the block design-based policy outperforms random and round-robin scheduling in different scenarios. Specifically, it reduces the average waiting time in the queue to up to $25\%$ compared to the random policy and up to $100\%$ compared to the round-robin policy. 
\end{abstract}

\begin{IEEEkeywords}
Queuing, combinatorial designs, redundancy, balanced and incomplete block designs.
\end{IEEEkeywords}

\newpage
\section{Introduction}
Redundancy in queuing systems could reduce the average sojourn time of jobs \cite{dean2013tail}, \cite{joshi2014delay}. It is widely studied in theory \cite{joshi2017efficient}, \cite{lee2017speeding}, and employed in practice, \cite{he2010comet}, \cite{bernardin2006using}. Redundancy is particularly beneficial in systems with high variability in job service times \cite{gardner2017redundancy}.
In a system with multiple servers, redundancy could reduce the average time a job spends in the system. Particularly, when the time required to complete a job is variable across the jobs, redundancy can improve the performance by \textit{diversifying} the queuing time of jobs. That is, by concurrent execution at multiple servers, a job's queuing time is the minimum queuing time across its copies \cite{behrouzi2018effect}.

Scheduling research in queuing systems has long history \cite{wang1985load}. It has had different objectives, e.g., load balancing \cite{harchol1996exploiting}, fairness \cite{bansal2001analysis}, minimizing the average/variability of job lifetime in the system \cite{wierman2005classifying}, maximizing resource utilization \cite{peng2015r}. As these objectives may not align, various other works consider the trade-off between different objectives \cite{bansal2001analysis}. In general, scheduling policies can be categorized into \textit{adaptive} and \textit{nonadaptive} policies.  Adaptive policies use the information of all or some queues' length/servers' workload to pick server(s) for arriving jobs. Join the shortest queue \cite{gupta2007analysis} policy uses the information about all servers. In contrast, the power-of-d choices \cite{gardner2017redundancy} policies use the information on a randomly picked small subset of servers. Nonadaptive policies, e.g., random or round-robin \cite{harchol1998choosing}, make blind but possibly structured decisions. Nonadaptive policies are used in practical systems when the estimate about the actual job's resource requirement, and thus the actual workload on the servers, is not accurate \cite{GoogleProvisioning}. Over-provisioning/under-provisioning of resources is common in such circumstances.

Scheduling problems in redundancy systems require selecting multiple servers for each job. Therefore, a scheduling problem could be much more complex in a system with redundancy than in no-redundancy systems. Classical scheduling policies, designed for no-redundancy systems, may not have the expected performance in scenarios with redundancy. That is because the goal of redundancy, which is to \textit{diversify} the sojourn time of jobs, has not been considered in the design of classical scheduling policies. Redundancy could increase the cost of running a job, and thus special attention has to be paid to fully utilize the extra resources used for redundant requests. In other words, one needs to ensure that redundancy brings diversity to the sojourn time of jobs \cite{peng2020diversityparallelism}. 

In this work, we study the performance of nonadaptive scheduling policies in queuing systems with redundancy. We consider an $(n,d,1)$ partial fork-join system \cite{rizk2016stochastic}, with early cancellation of redundant copies. Partial fork-join systems are extensively studied in distributed computing systems with redundancy (see \cite{behrouzi2021efficient} and references therein), and they are widely employed in commercial computing frameworks, such as Kubernetes \cite{burns2016borg}. Early cancellation of redundant copies of a job is shown to be optimal when the tail distribution of jobs' service time is log-concave \cite{joshi2017efficient}. Our goal is to understand what determines a scheduling policy's performance. 

Our study focuses on the statistics of overlaps between servers assigned to different jobs under a given scheduling policy. Diversification can be improved by a careful design of the \textit{overlaps} between the servers assigned to different jobs. The role of overlaps in the performance of distributed systems with redundancy has been studied in the literature: \cite{behrouzi2021efficient} examined how controlling overlaps between job batches affects job execution time, and \cite{aktacs2021evaluating} observed that an overlap-aware redundant data allocation can improve the performance in a distributed storage model.

To quantify overlaps, we introduce three performance indicators, and judge and compare the performance of scheduling policies based on these indicators. To that end, we develop an analogy to some occupancy (urns \& balls) problems. Urns \& balls models refer to basic probabilistic experiments in which
balls are thrown randomly into urns. We are, in general, interested in various patterns of urn occupancy (e.g., the number of empty urns). These models are central in many disciplines such as combinatorics, statistics, analysis of algorithms, and statistical physics. We were able to use the rich literature on the subject. In addition, our scheduling questions generated new fundamental problems in the field \cite{behrouzi2019average}, which have been subsequently addressed in \cite{michelen2019short}.

We next develop a graph-theoretical model for nonadaptive scheduling and use it to explore the hidden properties of the round-robin and block design-based policies. It turns out that the expansion property of a scheduling policy's associated graph is an indicator of its performance in queuing systems with redundancy. Finally, our simulation results show that our proposed scheduling policy achieves a lower average queuing time than the random and round-robin policies, as suggested by the performance indicators, by a considerable margin.

We first consider two canonical nonadaptive scheduling policies: random and round-robin.  We then propose a new nonadaptive scheduling policy based on a combinatorial block design. Block designs are combinatorial structures with many applications, including design of experiments, software testing, cryptography, and finite and algebraic geometry. They have recently been considered for straggler mitigation in distributed machine learning \cite{sakorikar2021variants,kadhe2019gradient,behrouzi2019redundancy}. The importance of such structures come from their ability to regulate overlaps between the copies of different jobs, as shown in \cite{behrouzi2019redundancy}.

This paper is organized as follows. In Section \ref{sec:model} we introduce the system model, the analogy to the urns and balls problem, and our performance indicators. Using the analogy to the urns and balls problem in Section \ref{sec:indicators} we derive performance indicators for random and round-robin scheduling policies. In Section \ref{sec:BIBD} we propose a new scheduling policy and derive its performance indicators. In Section \ref{sec:numerical_comp} we present comparisons of the performance indicators. We connect the performance indicators of scheduling policies with the expansion property of their associated graphs in Section \ref{sec:graph}. We present the simulation results of a queuing system in Section \ref{sec:simResult}. Finally, we offer the concluding remarks and future directions in Section \ref{sec:conclusion}.

\section{System Model and Performance Measures}\label{sec:model}
We first present the model of our system and give some examples. We then develop an analogy to some urns \& balls problems. Finally, we use this analogy to motivate and define our performance measures and indicators. 

\subsection{Queuing Model and Scheduling Policies} \label{subsec:scheduling_policies}
We consider a system with $n$ identical servers and a scheduler, as shown in Fig.~\ref{fig:sysModel}. 
    \begin{figure}[hbt]
        \centering
        \includegraphics[width=.4\columnwidth]{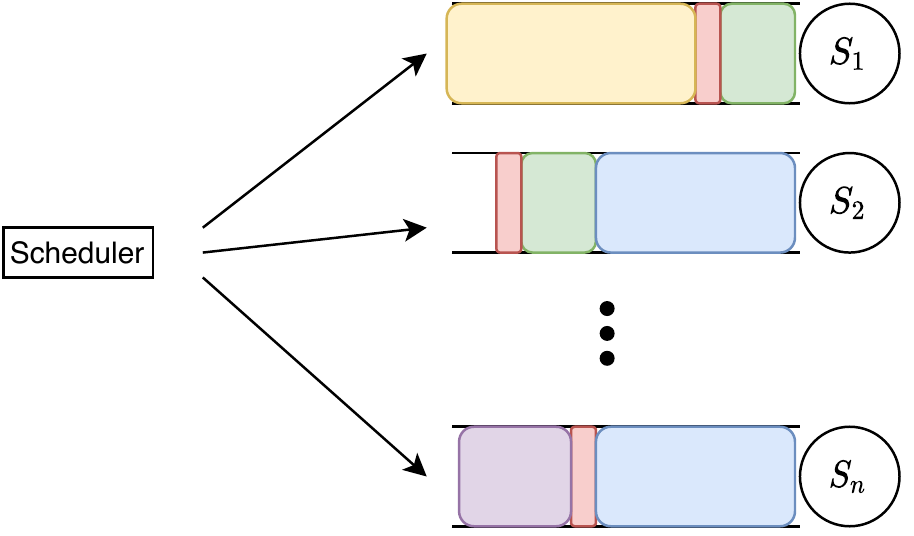}
        \caption{Scheduling model: Jobs get redundantly assigned to servers. The scheduling is nonadaptive, i.e., it has no information about the workload in the queues. Observe that copies of two different jobs may be assigned to the same server. These overlaps are unavoidable and depend on the scheduling policy.}
        \label{fig:sysModel}
    \end{figure}

The scheduler sends $d$ copies of each job to $d\le n$ servers. As soon as one of the copies starts the service, the redundant copies get instantaneously cancelled. Observe that copies of two different jobs may be assigned to the same server. These overlaps are unavoidable and depend on the scheduling policy. Furthermore, reducing a job's overlap with one group of jobs may increase its overlap with another group of jobs.

We focus on nonadaptive scheduling policies. Such policies make assignments of jobs to servers without taking into account the servers' workloads. Our goal is to understand and evaluate the effect of scheduling policies on the queuing time of jobs. The job's queuing time is the waiting time in the queue from the moment the job is scheduled until one of its copies gets into service. We consider two conventional nonadaptive scheduling policies: random and round-robin.  Under the random policy, the scheduler assigns $d$ copies of each job to $d$ randomly selected servers. Round-robin policy, on the other hand, takes a more structured approach. It assigns the $j$th copy of $i$th arriving job to the server indexed by $((i-1)d+j)\mod n$, for $j=1,2,\dots,r$. Furthermore, we propose a new nonadaptive scheduling policy based on combinatorial block designs. We will see that the block design-based policy,  can reduce the average queuing time of jobs by better managing the overlaps between the set of selected servers across the jobs. 

\subsection{Examples}\label{subsec:IntroExamples}
Fig.~\ref{fig:overlapping_queues} illustrates two scheduling policies with different patterns of overlapping queues. 
    \begin{figure}[hbt]
        \centering
        \includegraphics[width=.6\columnwidth]{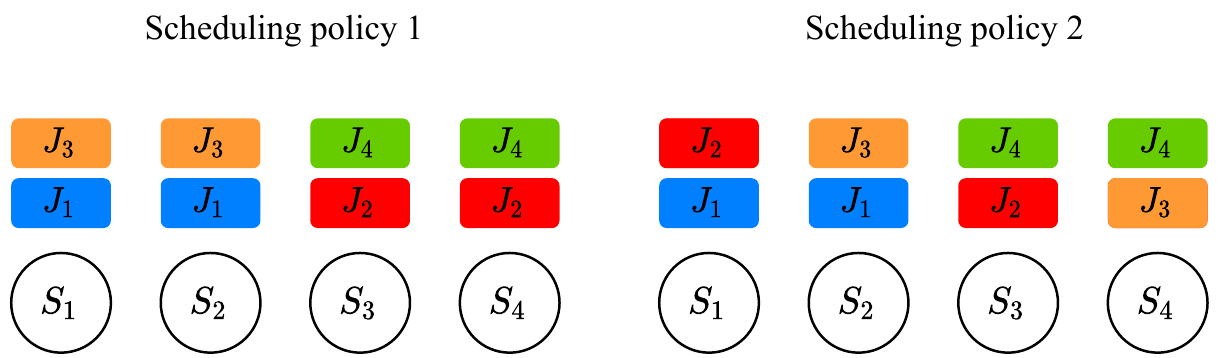}
        \caption{Scheduling with overlapping queues. Despite introducing redundancy, scheduling policy 1 provides no diversity on the jobs' queuing time. Scheduling policy 2, on the other hand, provides diversity. A scheduling policy's ability to provide diversity can be measured by studying the resulting overlaps between the replicas of jobs.}
        \label{fig:overlapping_queues}
    \end{figure}
The system is comprised of four servers $\{S_i\}_{i=1}^{4}$ and four jobs $\{J_k\}_{k=1}^{4}$. Each job is assigned to $d=2$ servers. With scheduling policy 1, each job shares its two servers with two replicas of another job. That is, scheduling policy 1 provides no diversity in overlaps. For instance, $J_1$ and $J_3$ share the same servers $S_1$ and $S_2$, making $J_3$ experience identical queuing time across the replicas. Therefore, despite redundancy, scheduling policy 1 provides no queuing diversity. For example, in the case of deterministic service time of jobs across the replicas, the redundancy offers no benefit under scheduling policy 1, as all replicas of a given job experience the same queuing time. On the other hand, with scheduling policy 2, each replica of a given job shares the server with a replica of a different job. That is, a job has overlapping queues with two different jobs and thus may experience diversity in queuing time across its replicas. As an example, $J_4$ shares $S_3$ with $J_2$ and it shares $S_4$ with $J_3$. Therefore, the queuing time of $J_2$ is the minimum of service time of $J_2$ and $J_3$. 
The resulting diversity reduces the average queuing time  and thus policy 2 is preferred over policy 1 in a queuing system with redundancy.

\subsection{The Analogy to An Urns \& Balls Problem}\label{subsec:perf_indicators}

Let us consider a queuing system with no servers. Jobs arrive, get scheduled redundantly, but do not enter the service, and thus there is no departure. We refer to this scenario as \textit{arrivals-only}. We can model this scenario as the urns and balls problem, where urns and balls are, respectively, equivalent to queues and jobs. With this connection, we can use the rich literature on such combinatorial problems to understand the overlapping pattern of scheduling policies in queuing systems with redundancy \cite{sachkov1996combinatorial}.

To quantify the statistics of overlaps for a given scheduling policy in the arrivals-only scenario, we proceed as follows. Suppose that in the urns and balls analogy, each ball is replicated to $d$ urns, chosen according to a ``scheduling'' policy. For a given number $T$ of balls scheduled into the urns (resulting in $dT$ replicas), there exist a set $\CMcal{P}=\{(i,j)|1\leq i,j\leq T, i\neq j\}$ of $\binom{T}{2}$ pairs of balls .

\begin{definition}\label{def:oX}
Let the random variable $X$ be the indices of a pair of balls chosen randomly from $\CMcal{P}$. We define the random number $o_X$ as the number of overlapping urns between the pairs of balls in $X$
\end{definition}

Our first performance indicator measures $\mathbbm{E}_{X}[o_X]$; the average number of overlapping urns between a random pair of balls. In general, a scheduling policy with smaller $\mathbbm{E}_{X}[o_X]$ provides higher average diversity in jobs' queuing time.

Consider the examples in Fig.~\ref{fig:overlapping_queues}. Scheduling policy 1 has $\mathbbm{E}_{X}[o_X]=2/3$, as there are six pairs of balls, two pairs of which overlap in 2 servers and the other four pairs overlap in no servers. The second performance indicator measures $\mathbbm{E}_{X}[o_X^2]$; the degree of ``evenness'' in the number of overlaps across the pairs of balls. A scheduling policy with smaller $\mathbbm{E}_{X}[o_X^2]$ provides more even overlaps between jobs' replicas. In Fig.~\ref{fig:overlapping_queues}, the scheduling policy 1 has $\mathbbm{E}_{X}[o_X^2]=4/3$. With scheduling policy 2, there are six pairs, four pairs of which have one overlap, and the other 2 has no overlap. Therefore, scheduling policy two has $\mathbbm{E}_{X}[o_X^2]=2/3$ and thus provides higher diversity.

\subsection{System Performance and its Indicators}

The performance metric we are concerned with is the average job queuing time. In systems with redundancy, analysis of the average queuing time is complex \cite{raaijmakers2018delta}. In this work, we introduce performance indicators for scheduling policies. We reason about the effectiveness of those indicators for understanding the relative performance of different scheduling policies in a queuing system with redundancy. 

Ideally the copies of an arriving job to a queuing system has to be scheduled to the servers with small workload. As a nonadaptive scheduling policy has no side information about the servers' workload, picking servers according to their workload is not possible. Therefore, the best a nonadaptive policy can do is to diversify the queuing time of a job by sending its copies to servers with varying workloads, hoping that at least one of the picked servers has a relatively small work left. For a given scheduling policy, we quantify its resulting diversity in jobs' queuing time by the number of overlapping queues a randomly selected job experiences with other jobs in the system.

%

We next introduce and define three performance indicators.
We consider a setting with $n$ urns and $T$ balls, where $d$ copies of each ball are placed in $d$ distinct urns. The $d$ urns are chosen according to a given scheduling policy. The assignment of $d$ copies of each ball may be referred to as \textit{$d$-redundant}. Let $N_i^T$, $i=1,2,\dots,n$ denote the numbers of balls in urn $i$ after placing all $T$ balls $d$-redundantly. Our performance indicators are defined using the statistics of random variables $o_X$ (cf. Definition~\ref{def:oX}) and $N_i$, after placing $T$ balls into $n$ urns $d$-redundantly, as $T\rightarrow\infty$.

\begin{definition}\label{def:LBF}
The \ul{Load Balancing Factor} (LBF) of a given scheduling policy is the ratio of the average number of balls in the minimum and maximum loaded urns,
    \begin{equation}
        \textup{LBF}_{\text{policy}}
        \coloneqq\lim_{T\to\infty}\frac{\mathbb{E}\left[\min\{N_i\}_{i=1}^{n}\right]}{\mathbb{E}\left[\max\{N_i\}_{i=1}^{n}\right]}.
    \end{equation}
\end{definition}

\begin{definition}\label{def:AOF}
The \ul{Average Overlap Factor} (AOF) of a given scheduling policy is defined as the reciprocal of the first moment of $o_X$,
    \begin{equation}\label{equ:AOF}
        \textup{AOF}_{policy}\coloneqq\lim_{T\to\infty}\frac{1}{\mathbb{E}[o_X]}.
    \end{equation}{}
\end{definition}{}

\begin{definition}\label{def:ODF}
The \ul{Overlap Diversity Factor} (ODF) of a given scheduling policy is defined as the reciprocal of the second moment of $o_X$,
    \begin{equation}\label{equ:ODF}
        \textup{ODF}_{policy}\coloneqq\lim_{T\to\infty}\frac{1}{\mathbb{E}[o_X^2]}.
    \end{equation}{}
\end{definition}{}

\section{Computing Performance Indicators}\label{sec:indicators}
We derive the closed-form expressions of the three performance indicators for the random and round-robin. We assume $n$ is a multiple of $T$. The extension of our results to general values of $T$ is straightforward. We numerically evaluate the derived expressions and observe that a large LBF, AOF, and/or ODF is an indication of a shorter average queuing time.

\subsection{Random Scheduling}
With the random policy, each ball is assigned to $d$ urns, chosen uniformly at random without replacement. With $d=1$, the problem boils down to the classical urns and balls problem where several analytical results exist for the occupancy of the urns, e.g., \cite{raab1998balls,dubhashi1998balls}. With $d>1$, \cite{michelen2019short}, inspired by \cite{behrouzi2019average}, derives asymptotic results for the number of balls in the maximum loaded bin, when $n\rightarrow\infty$. We adopt Lemma~\ref{lem:random_LBF}  from \cite{michelen2019short}.

\begin{lemma}\label{lem:random_LBF}
After throwing $T$ balls $d$-redundantly to $n$ urns, when urns are chosen uniformly at random for each ball, the LBF can be asymptotically approximated by,
    \begin{equation}
        \textup{LBF}_{random}\approx\textup{max}\left\{0,\frac{\frac{Td}{n}-\sqrt{\frac{2Td(n-d)\textup{log}(n)}{n^2}}}{\frac{Td}{n}+\sqrt{\frac{2Td(n-d)\textup{log}(n)}{n^2}}}\right\},
    \end{equation}{}
as $n\rightarrow\infty$.
\end{lemma}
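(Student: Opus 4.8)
The plan is to pin down the marginal law of a single urn's load, then import the Gaussian extreme-value scaling for the maximally loaded urn (the content cited from \cite{michelen2019short}) and obtain the matching scaling for the minimally loaded urn by symmetry. For a single ball the $d$ target urns form a uniformly random $d$-subset of the $n$ urns, so urn $i$ is hit with probability $d/n$; across the $T$ balls these events are independent. Hence $N_i$ is a sum of $T$ independent Bernoulli$(d/n)$ indicators, i.e. $N_i \sim \text{Binomial}(T, d/n)$, with mean $\mu = Td/n$ and variance $\sigma^2 = Td(n-d)/n^2$. The loads across urns are only mildly (negatively) dependent, through the single-ball constraint that the $d$ chosen urns be distinct, and in the regime of interest this dependence does not affect the leading-order extreme-value behavior.

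Next I would invoke the asymptotics for the extremes. By the central limit theorem the normalized loads $(N_i-\mu)/\sigma$ are approximately standard normal, and the maximum of $n$ such variables concentrates around $\sqrt{2\log n}$; this is precisely the statement adopted from \cite{michelen2019short}, giving $\mathbb{E}[\max_i N_i] \approx \mu + \sigma\sqrt{2\log n}$, and substituting $\mu,\sigma$ reproduces the denominator $\tfrac{Td}{n}+\sqrt{\tfrac{2Td(n-d)\log n}{n^2}}$. For the minimum I would argue by symmetry of the Gaussian bulk: when $\mu$ dominates $\sigma\sqrt{\log n}$ the lower deviations mirror the upper ones, so $\mathbb{E}[\min_i N_i] \approx \mu - \sigma\sqrt{2\log n}$, which is the numerator. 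Forming the ratio gives the claimed expression, and the outer $\max\{0,\cdot\}$ handles the degenerate case $\sigma\sqrt{2\log n} > \mu$: since $N_i \ge 0$ deterministically we always have $\mathbb{E}[\min_i N_i] \ge 0$, so whenever the Gaussian approximation would predict a negative minimum (too few balls to keep every urn occupied) the value is clamped to $0$.

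I expect the main obstacle to be the rigorous justification of the minimum's scaling, since the binomial lower tail is not exactly symmetric to its upper tail once $\mu$ becomes comparable to $\sqrt{\log n}$, and one must verify that the weak negative dependence among urns does not alter the $\sqrt{2\log n}$ constant. These are exactly the technical points controlled in \cite{michelen2019short}, on which I would lean for the error estimates; the remaining work is the routine bookkeeping of inserting the binomial moments and assembling the ratio.
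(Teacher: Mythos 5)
Your proposal is correct and follows essentially the same route as the paper: approximate $\mathbb{E}[\max_i N_i]$ by $\tfrac{Td}{n}+\sqrt{\tfrac{2Td(n-d)\log n}{n^2}}$ using the result of \cite{michelen2019short}, obtain the minimum by the mirror-image expression, clamp the (possibly negative) CLT-based approximation of the minimum at zero, and take the ratio. The only difference is that you unpack the origin of the constant (the Binomial$(T,d/n)$ marginal and the $\sqrt{2\log n}$ Gaussian extreme-value scaling) where the paper simply cites \cite{behrouzi2019average} and \cite{michelen2019short} for the expansion $\mathbb{E}[N_{n:n}]=\tfrac{Td}{n}+C_{n,d}\sqrt{T}+O(1/\sqrt{T})$ and the value of $C_{n,d}$.
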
{}
\begin{proof}
The average number of balls in the maximum loaded urn, after throwing $T$ balls $d$-redundantly to $n$ urns with random scheduling, is shown in \cite{behrouzi2019average} that follows
    \begin{equation}
        \mathbb{E}[N_{n:n}]=\frac{Td}{n}+C_{n,d}\sqrt{T}+O\left(1/\sqrt{T}\right).
    \end{equation}{}
Then it is proved in \cite{michelen2019short} that,
    \begin{equation}
        C_{n,d}\approx\sqrt{\frac{2d(n-d)\textup{log}(n)}{n^2}},\quad\textup{as}\quad  n\rightarrow\infty.
    \end{equation}{}
Therefore,
    \begin{equation}
        \mathbb{E}[N_{n:n}]\approx\frac{Td}{n}+\sqrt{\frac{2Td(n-d)\textup{log}(n)}{n^2}},\quad\textup{as}\quad  n\rightarrow\infty.
    \end{equation}{}
From \cite{michelen2019short}, it is straightforward that the number of balls in the minimum loaded urn could be approximated by,
    \begin{equation}
        \mathbb{E}[N_{1:n}]\approx\frac{Td}{n}-\sqrt{\frac{2Td(n-d)\log(n)}{n^2}},\quad\textup{as}\quad  n\rightarrow\infty.
        \label{minApprox}
    \end{equation}{}
However, since this approximations are derived using central limit theorem, negative values are not excluded. In fact, it is easy to verify that for $T<2(n/d-1)\textup{log}n$ (\ref{minApprox}) is negative. Therefore, an approximation of the number of balls in the minimum loaded urn could be given as
    \begin{equation}
        \mathbb{E}[N_{1:n}]\approx\textup{max}\left\{0,\frac{Td}{n}-\sqrt{\frac{2Td(n-d)\textup{log}(n)}{n^2}}\right\},
    \end{equation}{}
as $n\rightarrow\infty$, which completes the proof.
\end{proof}
The approximations for the number of balls in the maximum/minimum loaded urns together with their experimental values are plotted in Fig. \ref{fig:max-min}. Although obtained for large values of $n$, the approximations follow the experimental values very closely for small values of $n$ as well. This conformity holds for number of balls in both the maximum and the minimum loaded urns, and across different values of $d$. In Fig. \ref{fig:lbf}, approximations and experimental values of LBF are plotted. The two values follow a similar trend and their gap continues to shrink as $d$ increases.

    \begin{figure}[t]
        \centering
        \includegraphics[width=.6\columnwidth]{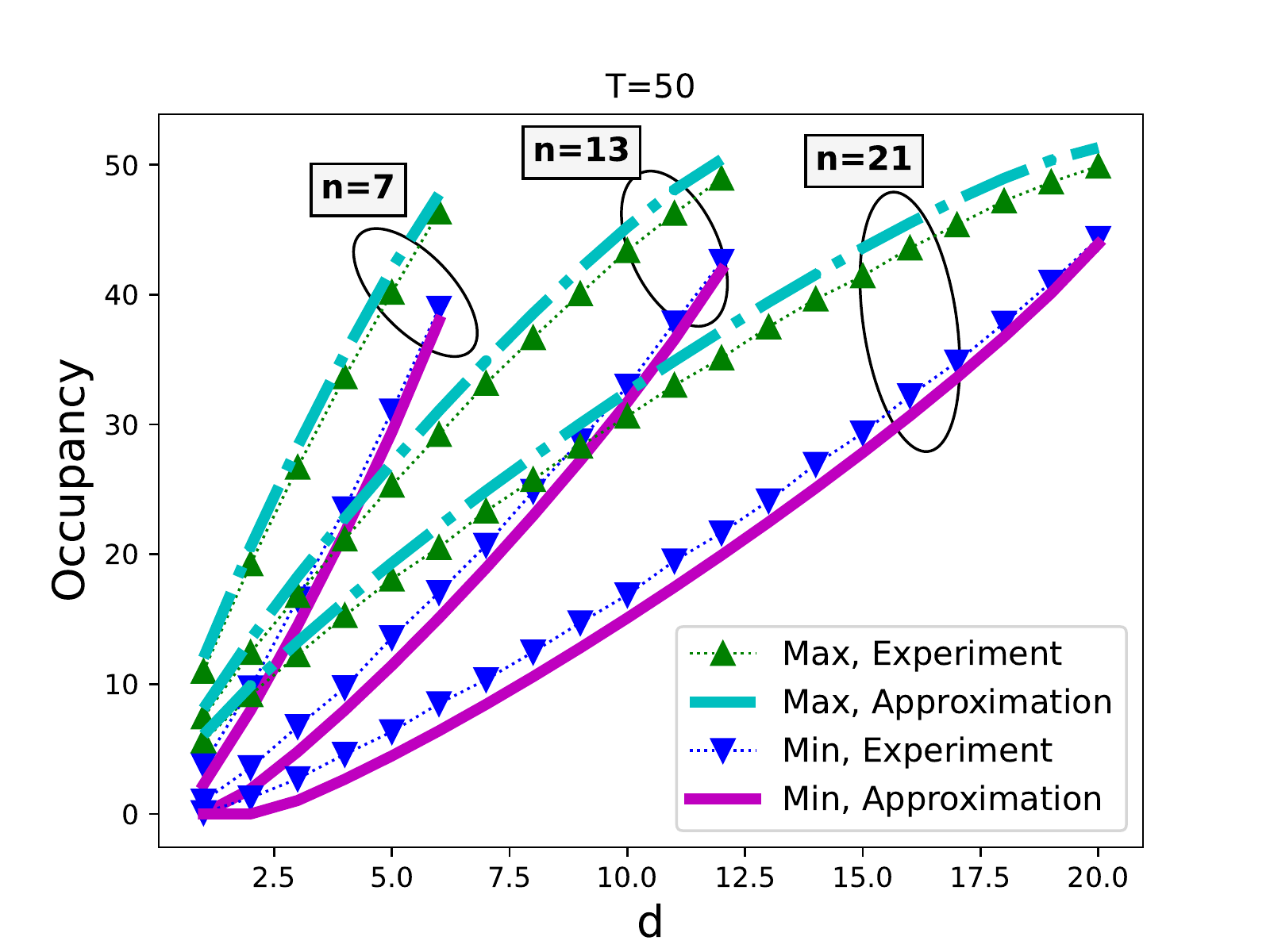}
        \caption{Average number of balls in the maximum and the minimum loaded urns after throwing 50 balls $d$-redundantly versus $d$, for three different $n$. The approximation for both minimum and maximum values follow the experimental values.}
        \label{fig:max-min}
    \end{figure}

\begin{lemma}
After throwing $T$ balls $d$-redundantly to $n$ urns, when urns are chosen uniformly at random for each ball,
    \begin{equation}
        \begin{split}{}
            \textup{AOF}_{random}&=\frac{n}{d^2},\\
            \textup{ODF}_{random}&=\frac{n(n-1)}{d^2(n+d(d-2))}.
        \end{split}
    \label{ODFrandom}
    \end{equation}{}
\end{lemma}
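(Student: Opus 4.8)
The plan is to reduce the computation to the overlap statistics of two independent uniformly random $d$-subsets of $\{1,\dots,n\}$, and then evaluate the required moments by an indicator decomposition. The key observation is that under random scheduling the $d$ urns of each ball are drawn independently of all the other balls, so for \emph{every} fixed pair $(i,j)\in\CMcal{P}$ the overlap $o_X$ has the same distribution, namely that of $|A\cap B|$ where $A$ and $B$ are independent uniform $d$-subsets of $\{1,\dots,n\}$. Consequently the distribution of $o_X$ does not depend on $T$ at all, the average over the $\binom{T}{2}$ pairs equals this common value, and the limits $T\to\infty$ in \eqref{equ:AOF} and \eqref{equ:ODF} are immediate. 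This removes the only potentially delicate point — the $T$-dependence — and leaves two moment computations.

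For the first moment I would introduce, for each urn $k\in\{1,\dots,n\}$, the indicator $I_k=\mathbbm{1}\{k\in A\}\,\mathbbm{1}\{k\in B\}$, so that $o_X=\sum_{k=1}^{n}I_k$. Since $A$ and $B$ are independent and each contains urn $k$ with probability $d/n$, we have $\mathbb{E}[I_k]=(d/n)^2$, and by linearity $\mathbb{E}[o_X]=n(d/n)^2=d^2/n$. Taking the reciprocal gives $\textup{AOF}_{random}=n/d^2$. For the second moment I would expand $o_X^2=\sum_{k}I_k+\sum_{k\neq l}I_kI_l$, using $I_k^2=I_k$. The cross term uses $\mathbb{E}[I_kI_l]=\big(\mathbb{P}(k,l\in A)\big)^2=\big(\tfrac{d(d-1)}{n(n-1)}\big)^2$, again by independence of $A$ and $B$ together with the fact that a uniform $d$-subset contains any two fixed urns with probability $\binom{n-2}{d-2}/\binom{n}{d}=d(d-1)/(n(n-1))$. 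Summing the $n$ diagonal terms and the $n(n-1)$ off-diagonal terms yields $\mathbb{E}[o_X^2]=\tfrac{d^2}{n}+\tfrac{d^2(d-1)^2}{n(n-1)}$, and taking the reciprocal of its simplified form gives the claimed $\textup{ODF}_{random}$.

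The only genuine work is the algebraic simplification in the last step: writing $\mathbb{E}[o_X^2]=\tfrac{d^2}{n}\big(1+\tfrac{(d-1)^2}{n-1}\big)$ and checking that $n-1+(d-1)^2=n+d(d-2)$ produces the stated denominator $d^2\big(n+d(d-2)\big)$. I do not anticipate a real obstacle here; the two points to get right are the justification that $o_X$ is $T$-independent (so the limit is trivial) and the pair probability $d(d-1)/(n(n-1))$ that two fixed urns lie in the same random subset. As a cross-check, one could instead recall that $|A\cap B|$ is $\mathrm{Hypergeometric}(n,d,d)$, with mean $d^2/n$ and variance $d^2(n-d)^2/\big(n^2(n-1)\big)$, and recover the same $\mathbb{E}[o_X^2]$ through $\mathrm{Var}+(\mathbb{E}[o_X])^2$, which confirms both expressions.
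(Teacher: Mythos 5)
Your proof is correct, and it reaches the same two moments as the paper by a genuinely different route. The paper writes down the hypergeometric pmf $p_k=\binom{d}{k}\binom{n-d}{d-k}/\binom{n}{d}$ for the overlap of two independent uniform $d$-subsets and then evaluates the binomial sums $\sum_k k\,p_k$ and $\sum_k k^2 p_k$ directly (Vandermonde-type identities), whereas you avoid the pmf entirely: you decompose $o_X=\sum_{k=1}^n I_k$ into per-urn indicators and use linearity of expectation, needing only the inclusion probabilities $d/n$ and $d(d-1)/(n(n-1))$ together with independence of the two subsets. Your algebra checks out: $\frac{d^2}{n}+\frac{d^2(d-1)^2}{n(n-1)}=\frac{d^2\left(n-1+(d-1)^2\right)}{n(n-1)}=\frac{d^2\left(n+d(d-2)\right)}{n(n-1)}$, matching the paper's second moment, and the variance cross-check via $\mathrm{Var}+(\mathbb{E}[o_X])^2$ is also consistent. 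You are also more explicit than the paper on a point it leaves implicit, namely that every pair in $\CMcal{P}$ has the same overlap distribution independent of $T$, so the average over the $\binom{T}{2}$ pairs and the limit $T\to\infty$ in the definitions of AOF and ODF are trivial. The trade-off is minor: the paper's route makes the full overlap distribution \eqref{equ:overlapDistRand} available (which it reuses later when comparing with round-robin), while your route is more elementary and sidesteps the combinatorial sum manipulations.
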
{}
\begin{proof}
For two sets of urns of size $d$, chosen uniformly at random and independent from one another, the probability that they overlap in exactly $k$ urns is given by
    \begin{equation}\label{equ:overlapDistRand}
        p_k=\frac{\binom{d}{k}\binom{n-d}{d-k}}{\binom{n}{d}}, \qquad k=0,1,\dots,r.
    \end{equation}{}
Therefore,
    \begin{equation}
        \begin{split}{}
            \mathbbm{E}[X]&=\frac{1}{\binom{n}{d}}\sum_{k=1}^{d}k\binom{d}{k}\binom{n-d}{d-k},\\
                &=\frac{d^2}{n},
                \end{split}
    \label{randomFirstMoment}
    \end{equation}{}
and
    \begin{equation}
        \begin{split}{}
            \mathbbm{E}[X^2]&=\frac{1}{\binom{n}{d}}\sum_{k=1}^{d}k^2\binom{d}{k}\binom{n-d}{d-k},\\
                &=\frac{d^2(n+d(d-2))}{n(n-1)}.
        \end{split}
    \label{randomSecendMoment}
    \end{equation}{}
Respectively, substitution of (\ref{randomFirstMoment}) and (\ref{randomSecendMoment}) in (\ref{equ:AOF}) and (\ref{equ:ODF}) completes the proof.
\end{proof}

It could be seen that both AOF and ODF decrease with $d$. In other words, increasing the number of replicas increases the average the average overlapping urns between the balls and also decreases the evenness of the number of overlapping urns between the balls. On the other hand, increasing $n$ increases both AOF and ODF. That is, for a fixed $d$ increasing $n$ reduces the average number of overlapping urns between the balls and increases the evenness of the number of overlapping urns between the balls.

    \begin{figure}[t]
        \centering
        \includegraphics[width=.6\columnwidth]{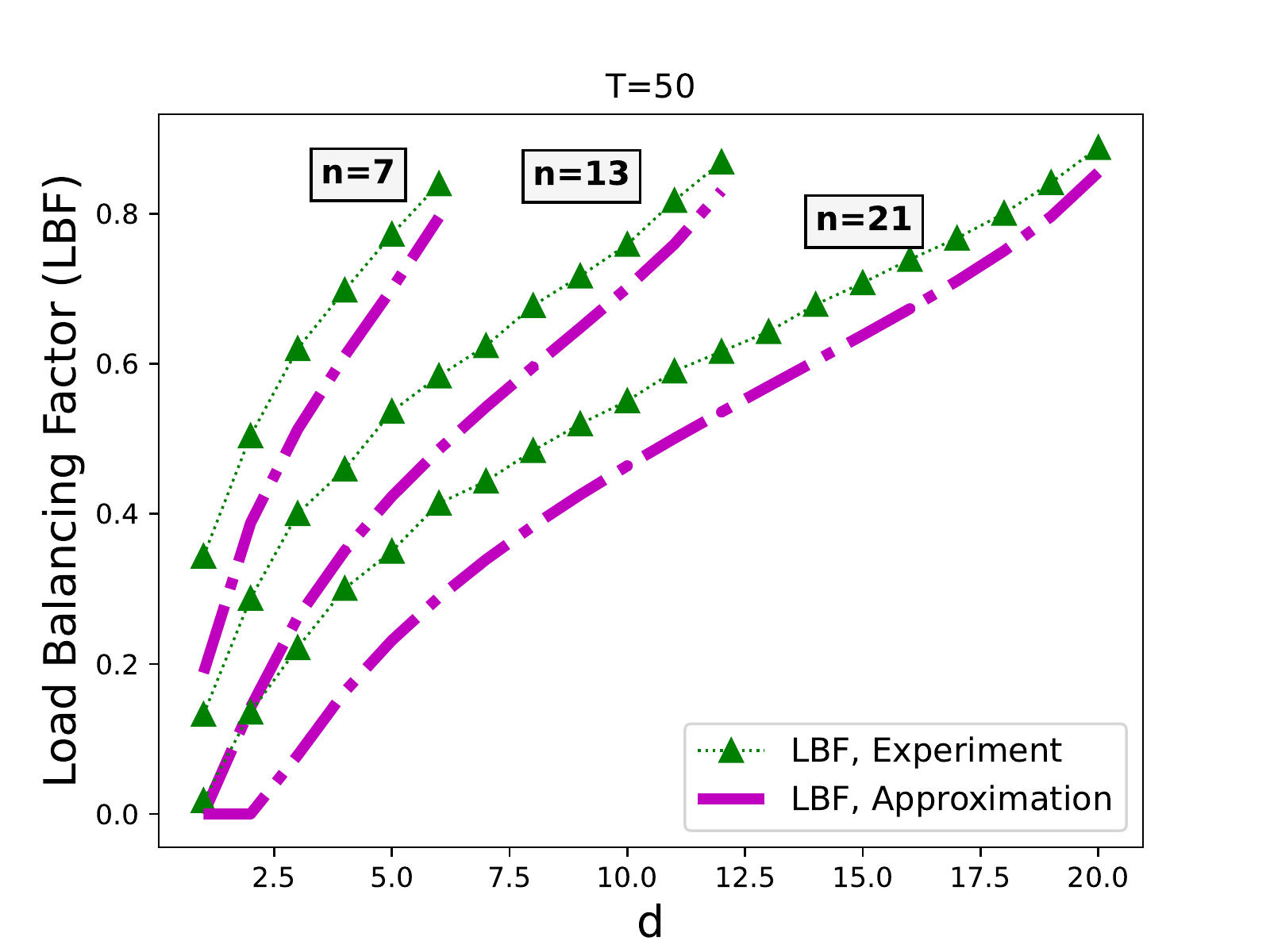}
        \caption{Load balancing factor (LBF) after throwing 50 balls $d$-redundantly versus $d$, for three different $n$. The approximations follows the experimental values. For a given $d$, LBF decreases with $n$.}
        \label{fig:lbf}
    \end{figure}

\subsection{Round-Robin Scheduling}
With round-robin scheduling, a ball is $d$-redundantly thrown into a set of urns that are chosen according to a round-robin policy, as follows. Suppose that the first ball being thrown to the (empty) urns is indexed 1 and each subsequent ball increments this index by one. Then, the ball with index $i\in\{1,2,3,\dots\}$ is $d$-redundantly scheduled to urns $((i-1)d+1)\mod{n},((i-1)d+2)\mod{n},\dots,((i-1)d+d)\mod{n}$. Accordingly, the performance indicators can be derived as follows.

\begin{lemma}
After throwing $T$ balls $d$-redundantly to $n$ urns, when urns are chosen according to the round-robin policy for each ball, $\textup{LBF}_{round-robin}=1$.
\end{lemma}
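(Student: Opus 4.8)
The plan is to exploit the fact that the round-robin assignment is \emph{deterministic}, so the expectations appearing in Definition~\ref{def:LBF} are just the actual (non-random) loads, and then to show that these loads can differ by at most one replica. First I would linearize the rule. Writing $m=(i-1)d+j$ for the global index of the $j$th replica of the $i$th ball, the scheduling rule places this replica in urn $m \bmod n$. As $i$ ranges over $\{1,2,\dots\}$ and $j$ over $\{1,\dots,d\}$, the index $m$ runs through $1,2,3,\dots$ without gaps, so the replicas simply fill the urns in the cyclic order $1,2,\dots,n,1,2,\dots$. Because $d\le n$, the $d$ urns used by any single ball are distinct, so no ball is ever assigned twice to the same urn and the count is well defined.

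Next I would count the loads exactly. After $T$ balls there are $Td$ replicas in total; writing $Td=qn+s$ with $0\le s<n$, the cyclic filling deposits $q+1$ replicas in each of the first $s$ urns and $q$ replicas in the remaining $n-s$ urns (all urns receive $q$ when $s=0$). Hence, deterministically,
\begin{equation}
\max_i N_i = q + \mathbbm{1}[s\ge 1], \qquad \min_i N_i = q,
\end{equation}
so that $\max_i N_i-\min_i N_i\le 1$ for every $T$, and since there is no randomness $\mathbb{E}[\max_i N_i]$ and $\mathbb{E}[\min_i N_i]$ equal these values. Taking the limit, and noting $q=\lfloor Td/n\rfloor\to\infty$,
\begin{equation}
\textup{LBF}_{round-robin}=\lim_{T\to\infty}\frac{q}{q+\mathbbm{1}[s\ge 1]}=1.
\end{equation}
In the clean divisibility case assumed in the text, where $Td$ is a multiple of $n$, one has $s=0$ and the ratio equals $1$ exactly for every such $T$, which already gives the claim without a limiting argument.

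I do not expect a genuine obstacle here. The one conceptual step worth stating carefully is that the apparently two-dimensional $(i,j)$ schedule collapses to a one-dimensional cyclic fill of the urns, which instantly caps the load imbalance at a single replica. The only bookkeeping is the remainder term $s$, and that is precisely the quantity that becomes negligible in the ratio as the common load $q$ diverges, so the limit is forced to $1$.
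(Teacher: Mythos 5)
Your proposal is correct and follows the same route as the paper: the paper's proof simply asserts that under round-robin the maximum and minimum urn loads differ by at most one, so their ratio tends to $1$ as $T\to\infty$. You merely supply the (easy) justification of that assertion via the cyclic-fill observation, which the paper leaves implicit.
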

\begin{proof}
With round-robin scheduling, the difference between the number of balls in the maximum loaded urn and the minimum loaded urn is at most one. Therefore, their ratio is 1 in the limit when $T\to\infty$.
\end{proof}
\begin{lemma}
With round-robin scheduling,
    \begin{equation}
        \begin{split}
            \textup{AOF}_{round-robin}&=\frac{n}{d^2},\\
            \textup{ODF}_{round-robin}&=\frac{3n}{2d^3+d}.
        \end{split}{}
    \end{equation}{}
    \label{ODF:roundRobin}
\end{lemma}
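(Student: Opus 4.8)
The plan is to leverage the rigid structure that round-robin forces on the occupancy pattern. Ball $i$ is placed on the block of $d$ cyclically consecutive urns starting at position $s_i = ((i-1)d)\bmod n$, so that the starting points of successive balls form the arithmetic progression $0,d,2d,\dots$ modulo $n$. My first step is to reduce the overlap $o_X$ of two balls $i,i'$ to a function of the single quantity $\hat\delta$, the circular distance between $s_i$ and $s_{i'}$: for $n\ge 2d$, two length-$d$ cyclic arcs whose starts are at circular distance $\hat\delta$ intersect in exactly $\max\{0,\,d-\hat\delta\}$ urns. This collapses the two-dimensional combinatorics into a one-dimensional count over offsets.

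Second, I would pin down the limiting distribution of offsets. Because the starting points cycle through the subgroup of $\mathbb{Z}_n$ generated by $d$, under the divisibility convention adopted here (so that these points sweep out all $n$ residues and are balanced), each of the $n$ positions receives an equal share of the balls as $T\to\infty$. Consequently a uniformly chosen pair of balls behaves, in the limit, like a pair of independent uniform positions on the cycle, and the signed offset $s_i-s_{i'}\bmod n$ is asymptotically uniform on $\{0,1,\dots,n-1\}$. Translating through $o_X=\max\{0,d-\hat\delta\}$ then yields the overlap law
\begin{equation}
\mathbb{P}(o_X=d)=\tfrac{1}{n},\qquad \mathbb{P}(o_X=k)=\tfrac{2}{n}\ \ (1\le k\le d-1),
\end{equation}
with all remaining mass on $o_X=0$; the factor $2$ records the two offsets $\pm(d-k)$ realizing a given positive overlap.

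With the distribution in hand, the moments are routine power sums. I would compute $\mathbb{E}[o_X]=\frac{d}{n}+\frac{2}{n}\sum_{k=1}^{d-1}k=\frac{d}{n}+\frac{d(d-1)}{n}=\frac{d^2}{n}$, whence $\textup{AOF}_{round-robin}=n/d^2$, matching the random policy. Likewise $\mathbb{E}[o_X^2]=\frac{d^2}{n}+\frac{2}{n}\sum_{k=1}^{d-1}k^2=\frac{d^2}{n}+\frac{(d-1)d(2d-1)}{3n}=\frac{2d^3+d}{3n}$ after using $\sum_{k=1}^{d-1}k^2=\frac{(d-1)d(2d-1)}{6}$, giving $\textup{ODF}_{round-robin}=\frac{3n}{2d^3+d}$. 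Substituting these reciprocals into Definitions~\ref{def:AOF} and \ref{def:ODF} finishes the proof.

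I expect the genuine obstacle to lie entirely in the second step: justifying rigorously that the per-pair offset equidistributes as $T\to\infty$, and controlling the boundary assumptions that make it clean. In particular I must keep $n\ge 2d$ so that the simple arc-intersection formula has no double counting from wraparound, and I must use the divisibility/coprimality hypothesis to ensure the progression $0,d,2d,\dots$ visits every residue an equal number of times; if $\gcd(n,d)>1$ the blocks only sweep a proper subgroup and the second-moment sum changes, so this hypothesis is precisely where the stated $\frac{3n}{2d^3+d}$ is used. The power-sum evaluations in the last step are mechanical by comparison.
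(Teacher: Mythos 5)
Your proposal is correct and follows essentially the same route as the paper: both derive the limiting overlap distribution $\Pr\{o_X=k\}\to 2/n$ for $1\le k\le d-1$ and $\Pr\{o_X=d\}\to 1/n$ from the cyclic offset structure of round-robin, then evaluate the two power sums and invert. Your explicit caveats (requiring $n\ge 2d$ for the arc-intersection formula and $\gcd(n,d)=1$ so the starting points equidistribute) are conditions the paper leaves implicit but which do hold in its regime $n=d(d-1)+1$, so they strengthen rather than change the argument.
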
{}
\begin{proof}
The first ball, which is thrown to empty urns, is indexed $i=1$. Without loss of generality, consider the set of urns for the ball $i=1$ to be $\{1,2,\dots,d\}$. After throwing $T|n$ balls, there will be $T-1$ sets of urns, chosen according to round-robin policy. Then, it is easy to verify that the number of sets that overlap with the set for the first ball in exactly $k$ urns is $\frac{2T}{n}-1$, for $k\in[1,d-1]$, and $\frac{T}{n}-1$, for $k=d$. Therefore, the probability of a set of urns chosen for a ball indexed $i\geq2$, to have exactly $k$ overlaps with that of $i=1$ is
    \begin{equation}\label{equ:overlapDistRR}
        Pr\{o_X=k\} =
            \begin{cases}
                1-\frac{2d-1}{n}+\frac{d}{T} \quad & k=0,\\
                \frac{2}{n}-\frac{1}{T} \quad & k=1,2,\dots,d-1,\\
                \frac{1}{n}-\frac{1}{T} \quad & k=d,\\
                0 & \textup{otherwise}.
            \end{cases}       
    \end{equation}
Therefore,

    \begin{equation}
        \begin{split}
            \mathbb{E}[o_X]&=\left(\frac{1}{n}-\frac{1}{T}\right)r+\sum_{k=1}^{d-1}k\left(\frac{2}{n}-\frac{1}{T}\right)\\
            &=\left(\frac{1}{n}-\frac{1}{T}\right)r+\left(\frac{2}{n}-\frac{1}{T}\right)\frac{d(d-1)}{2}.\\
        \end{split}{}
        \label{EX_rr}
    \end{equation}
Finally,
    \begin{equation}
        \text{AOF}=\lim_{T\to\infty}\frac{1}{\mathbbm{E}[o_X]}=\frac{n}{d^2}.
    \end{equation}
Furthermore,
    \begin{equation}
        \begin{split}
            \mathbb{E}[o_X^2]&=\left(\frac{1}{n}-\frac{1}{T}\right)d^2+\sum_{k=1}^{d-1}k^2\left(\frac{2}{n}-\frac{1}{T}\right)\\
            &=\left(\frac{1}{n}-\frac{1}{T}\right)d^2+\left(\frac{2}{n}-\frac{1}{T}\right)\frac{2d^3-3d^2+d}{6},
        \label{EX2_rr}
        \end{split}{}
    \end{equation}{}
and
    \begin{equation}
        \text{ODF}=\lim_{T\to\infty}\frac{1}{\mathbbm{E}[o_X^2]}=\frac{3n}{d(2d^2+1)}.
    \end{equation}
\end{proof}

The behaviour of AOF and ODF with respect to $n$ and $d$ for the round-robin policy is the same with the random policy. That is, both metrics increase with $n$ and decrease with  $d$.

\section{Scheduling with Block Designs}\label{sec:BIBD}
 We first provide a basic description of block designs. Then, we introduce a scheduling policy based on a specific design and show that it improves the performance indicators introduced in \ref{subsec:perf_indicators}.

\subsection{Block Designs Basics}
In combinatorics, a block design is a pair $(\CMcal{X},\CMcal{A})$, where $\CMcal{X}$ is a set of objects and $\CMcal{A}$ is a set of non-empty subsets of $\CMcal{X}$, called \textit{blocks} \cite{stinson2007combinatorial}. There exist several combinatorial designs, each satisfying certain balance or symmetry property across the blocks. In this work, we consider one type of design called \textit{Balanced and Incomplete Block Design} (BIBD). In the following, we first define the BIBD and then discuss their benefits for scheduling in a queuing system.

A $(\nu,k,\lambda)$-BIBD is a set $\CMcal{A}$ of subsets of a finite collection $\CMcal{X}$ of objects that satisfies the following conditions,
    \begin{enumerate}
        \item $|\CMcal{X}|=\nu$,
        \item every block consists of $k$ objects, and
        \item every pair of distinct objects is contained in exactly $\lambda$ blocks.
    \end{enumerate}{}
A BIBD is symmetric if the number of objects and the number of blocks are identical. A necessary condition for the existence of a symmetric BIBD is $\lambda(\nu-1)=k(k-1)$. In this work, we consider the symmetric BIBD with $\lambda=1$, and refer to them as BIBD for simplicity. An example of a symmetric BIBD with $\lambda=1$ is shown in Fig.~\ref{fig:FanoScheduling}. The figure shows the $(7,3,1)$-BIBD, where $\CMcal{X}=\{J_1,\dots,J_7\}$ $\CMcal{A}=\{\textcolor{brown}{J_1J_2J_3},\textcolor{red}{J_1J_4J_5},\textcolor{blue}{J_1J_6J_7},\textcolor{teal}{J_2J_5J_7},\textcolor{cyan}{J_3J_4J_7},\textcolor{purple}{J_2J_4J_6},J_3J_5J_6\}$. 

\begin{figure}[hbt]
    \centering
    \includegraphics[scale=1.5]{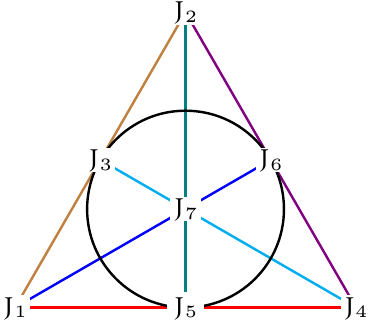}
    \caption{$(7,3,1)$-BIBD where the seven points correspond to seven jobs. Each job is assigned to three servers (lines in the figure). Two different jobs (points) share only a single server (line).}
    \label{fig:FanoScheduling}
\end{figure}

\subsection{Scheduling with BIBD}
Let us consider a queuing system, where each job is $d$-redundantly scheduled to the existing $n$ servers. Suppose a $(n,d,1)$-BIBD exists. Then, the BIBD scheduler is formally defined as follows.

\begin{definition}
In a queuing system with $n$ servers, the BIBD scheduler assigns each job redundantly to $d$ servers according to the objects in a block of a $(n,d,1)$-BIBD, provided that such a design exists.
\end{definition}

In $(n,d,1)$-BIBD, every pair of two blocks have exactly one object in common. Therefore, with the BIBD scheduling, a given job shares all $d$ servers with some jobs, ``$d$-overlap jobs'' , and it shares only one server with other jobs, ``1-overlap jobs''. When blocks are chosen randomly for arriving jobs, then for a given job, the number of $d$-overlap jobs is $O(1/n)$ and the number of 1-overlap jobs is $O(1-1/n)$. Thus, when $n$ is large, the number of ``$d$-overlap jobs'' goes to zero. In other words, the number of overlapping servers between any 2-pairs of jobs is asymptotically equal to one. This is a desirable consequence of the BIBD scheduling; the number of servers a job shares with the rest of the jobs is asymptotically even across the jobs (see discussion in Section \ref{subsec:perf_indicators}). The even overlap between the jobs results in maximum diversity in the queuing time. Accordingly, BIBD scheduling maximizes the benefit of diversity if queuing systems with redundancy.

We choose the parameters $n$ and $d$ such that there exist a $(n,d,1)$-BIBD. A necessary condition for such existence is $n=d(d-1)+1$. In scheduling with BIBD, an incoming job is assigned to the servers indicated by the objects in a block. Blocks can be chosen in a round-robin fashion or randomly across the jobs. In this work, we consider round-robin selection due to the simplicity of its performance indicators. We leave the study of other selection policies as a subject for future studies. In the remainder, BIBD scheduling refers to BIBD scheduling with round-robin selection of blocks, unless stated otherwise. In the following, we derive the performance indicators of BIBD scheduling in the corresponding urns and balls problem.

\begin{lemma}
After throwing $T$ balls $d$-redundantly to $n$ urns, when urns are chosen according to the BIBD scheduling, $\textup{LBF}_{BIBD}=1$.
\end{lemma}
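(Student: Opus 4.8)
The plan is to establish the claim $\textup{LBF}_{BIBD}=1$ by showing that, as in the round-robin case, the BIBD scheduling with round-robin selection of blocks distributes the balls across the urns as evenly as possible, so that the difference between the maximum and minimum loaded urns is bounded by a constant independent of $T$. First I would recall that in a symmetric $(n,d,1)$-BIBD the number of blocks equals the number of points, namely $n$, and that each point (server) lies in exactly $r$ blocks, where by the standard BIBD replication identity $r=\lambda(\nu-1)/(k-1)=(n-1)/(d-1)=d$ when $\lambda=1$ and $n=d(d-1)+1$. Thus every server appears in exactly $d$ of the $n$ blocks. This perfect regularity is the structural fact that drives the result.

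Next I would analyze one full cycle of the round-robin block selection. Since there are exactly $n$ blocks and blocks are assigned to arriving balls in round-robin order, after every $n$ balls each block has been used exactly once, and during such a complete cycle every server receives exactly $d$ balls (one for each block containing it). Hence after $cn$ balls, for any integer $c$, all urns contain precisely $cd$ balls, so the loads are perfectly equal and the max-to-min ratio is exactly $1$. For a general $T$ that is not a multiple of $n$, the partial final cycle can contribute at most one extra ball to each affected urn relative to the others, so $\max_i N_i^T - \min_i N_i^T$ is bounded by a constant (at most $d$, and in fact the imbalance stays $O(1)$) uniformly in $T$.

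I would then take the limit. Writing $\mathbb{E}[\max\{N_i\}] = Td/n + O(1)$ and $\mathbb{E}[\min\{N_i\}] = Td/n + O(1)$, the bounded additive gap is negligible against the linearly growing common term $Td/n$, so
\begin{equation}
\textup{LBF}_{BIBD}=\lim_{T\to\infty}\frac{\mathbb{E}[\min\{N_i\}_{i=1}^{n}]}{\mathbb{E}[\max\{N_i\}_{i=1}^{n}]}=\lim_{T\to\infty}\frac{Td/n+O(1)}{Td/n+O(1)}=1,
\end{equation}
which is the assertion. Under the stated assumption that $n$ divides $T$ (or $T$ is a multiple of $n$), the loads are in fact exactly equal and the conclusion is immediate without even needing the $O(1)$ bookkeeping.

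The main obstacle, and the only place requiring genuine care, is the combinatorial replication argument: one must justify that \emph{every} server lies in exactly $d$ blocks, which is what guarantees uniform per-cycle loading. This is precisely the symmetry/regularity property of a symmetric BIBD with $\lambda=1$, and it is exactly the feature that fails for an arbitrary set system; everything else in the proof is the same bounded-imbalance observation already used for round-robin scheduling. Once the replication number $r=d$ is pinned down, the rest follows as in the round-robin lemma.
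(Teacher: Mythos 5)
Your proof is correct and follows essentially the same route as the paper's one-line argument: the per-urn loads differ by a bounded amount uniformly in $T$, so the ratio of expected minimum to expected maximum tends to $1$. Your version is in fact more careful --- you justify the key regularity via the replication number $r=(n-1)/(d-1)=d$ and correctly bound the mid-cycle imbalance by $O(1)$ (up to $d$), whereas the paper simply asserts the difference is at most one, which can fail partway through a block cycle; either bound suffices for the limit.
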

\begin{proof}
With BIBD scheduling, the difference between the number of balls in the maximum loaded urn and the minimum loaded urn is at most one. Therefore, their ratio is 1 in the limit when $T\to\infty$.
\end{proof}

\begin{lemma}
With BIBD scheduling,
    \begin{equation}
        \begin{split}{}
            \textup{AOF}_{BIBD}&=\frac{n}{n+d-1},\\
            \textup{ODF}_{BIBD}&=\frac{n}{n+d^2-1}.
        \end{split}
    \end{equation}{}
\end{lemma}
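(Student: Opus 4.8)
The plan is to reduce both indicators to the \emph{limiting distribution} of $o_X$, which for the BIBD policy is supported on only two values. The crucial structural input is a standard fact about symmetric designs: a $(n,d,1)$-BIBD with $n=d(d-1)+1$ is symmetric (it has exactly $n$ blocks, and each point lies in $r=d$ blocks), and in a symmetric design any two distinct blocks meet in exactly $\lambda=1$ point. I would first invoke this block-intersection property (e.g.\ from the cited text \cite{stinson2007combinatorial}) to conclude, translating balls-in-blocks back to copies-in-servers, that two balls assigned to the \emph{same} block overlap in all $d$ urns, two balls assigned to \emph{distinct} blocks overlap in exactly one urn, and no other overlap value can occur. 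Hence $o_X\in\{1,d\}$ with probability one.

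Next I would exploit the round-robin selection of blocks. With $T$ a multiple of $n$, each of the $n$ blocks is used exactly $T/n$ times, so the number of same-block pairs is $n\binom{T/n}{2}$ out of the $\binom{T}{2}$ total pairs. A short simplification gives $\Pr\{o_X=d\}=\frac{T/n-1}{T-1}$ and $\Pr\{o_X=1\}=1-\frac{T/n-1}{T-1}$. Letting $T\to\infty$ these tend to $\frac{1}{n}$ and $\frac{n-1}{n}$ respectively, which is exactly the asymptotically even overlap pattern anticipated in the discussion preceding the lemma.

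Finally, I would substitute the limiting two-point distribution into the moment definitions of AOF and ODF. The first moment gives $\mathbb{E}[o_X]\to d\cdot\frac{1}{n}+1\cdot\frac{n-1}{n}=\frac{n+d-1}{n}$, so that $\textup{AOF}_{BIBD}=\frac{n}{n+d-1}$; the second moment gives $\mathbb{E}[o_X^2]\to d^2\cdot\frac{1}{n}+1\cdot\frac{n-1}{n}=\frac{n+d^2-1}{n}$, so that $\textup{ODF}_{BIBD}=\frac{n}{n+d^2-1}$, matching the claim.

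I expect the only genuine obstacle to be justifying the two-point support of $o_X$, i.e.\ the assertion that distinct blocks of a symmetric $\lambda=1$ design intersect in exactly one point; everything downstream (the pair count, the two limits, and the two moment evaluations) is elementary. If one preferred not to cite symmetric-design duality, the fallback would be a direct double-counting argument: fix a block $B$ and count incidences between its $d$ points and the remaining blocks, using that each point lies in exactly $r=d$ blocks and each pair of points in exactly one block, to show that every other block shares precisely one point with $B$.
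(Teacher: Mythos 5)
Your proposal is correct and follows essentially the same route as the paper: both arguments rest on the fact that two distinct blocks of a symmetric $(n,d,1)$-BIBD meet in exactly one point, so $o_X$ is supported on $\{1,d\}$, and both then compute the two moments of this two-point distribution and pass to the limit $T\to\infty$. Your exact pair count $\Pr\{o_X=d\}=\frac{T/n-1}{T-1}$ is in fact slightly more careful than the paper's finite-$T$ expression $\frac{1}{n}-\frac{1}{T}$ (which does not sum to one with $\frac{n-1}{n}$), but the two agree in the limit and yield the same AOF and ODF.
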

\begin{proof}
The proof follows the same in principle as the proof of Lemma \ref{ODF:roundRobin}. Except the probability distribution for the random variable $o_X$ is,
    \begin{equation}
         Pr\{o_X=k\} =
            \begin{cases}
                \frac{n-1}{n} \quad & k=1,\\
                \frac{1}{n}-\frac{1}{T} \quad & k=d,\\
                0 & \textup{otherwise}.
            \end{cases}    
    \end{equation}
Therefore,
    \begin{equation}
        \begin{split}{}
            \mathbb{E}[o_X]&=\frac{n-1}{n}+\left(\frac{1}{n}-\frac{1}{T}\right)d\\
        \end{split}
        \label{EX_bibd}
    \end{equation}
and
    \begin{equation}
        \text{AOF}_{BIBD}=\lim_{T\to\infty}\frac{1}{\mathbbm{E}[o_X]}=\frac{n}{n+d-1}
    \end{equation}
Furthermore,
    \begin{equation}
        \begin{split}{}
            \mathbb{E}[X^2]&=\frac{n-1}{n}+\left(\frac{1}{n}-\frac{1}{T}\right)d^2\\
        \end{split}
        \label{EX2_bibd}
    \end{equation}
and
    \begin{equation}
        \text{ODF}_{BIBD}=\lim_{T\to\infty}\frac{1}{\mathbbm{E}[o_X^2]}=\frac{n}{n+d^2-1}
    \end{equation}
\end{proof}
\section{Numerical Comparison of Performance Indicators}\label{sec:numerical_comp}
In this section, we provide a numerical comparison of the performance indicators of the three scheduling policies. To compare the scheduling policies, we consider values for $n$ and $d$ that satisfy the necessary condition for the existence of symmetric BIBD, i.e. $n=d(d-1)+1$.
    \begin{figure}[t]
        \centering
        \includegraphics[width=.6\columnwidth]{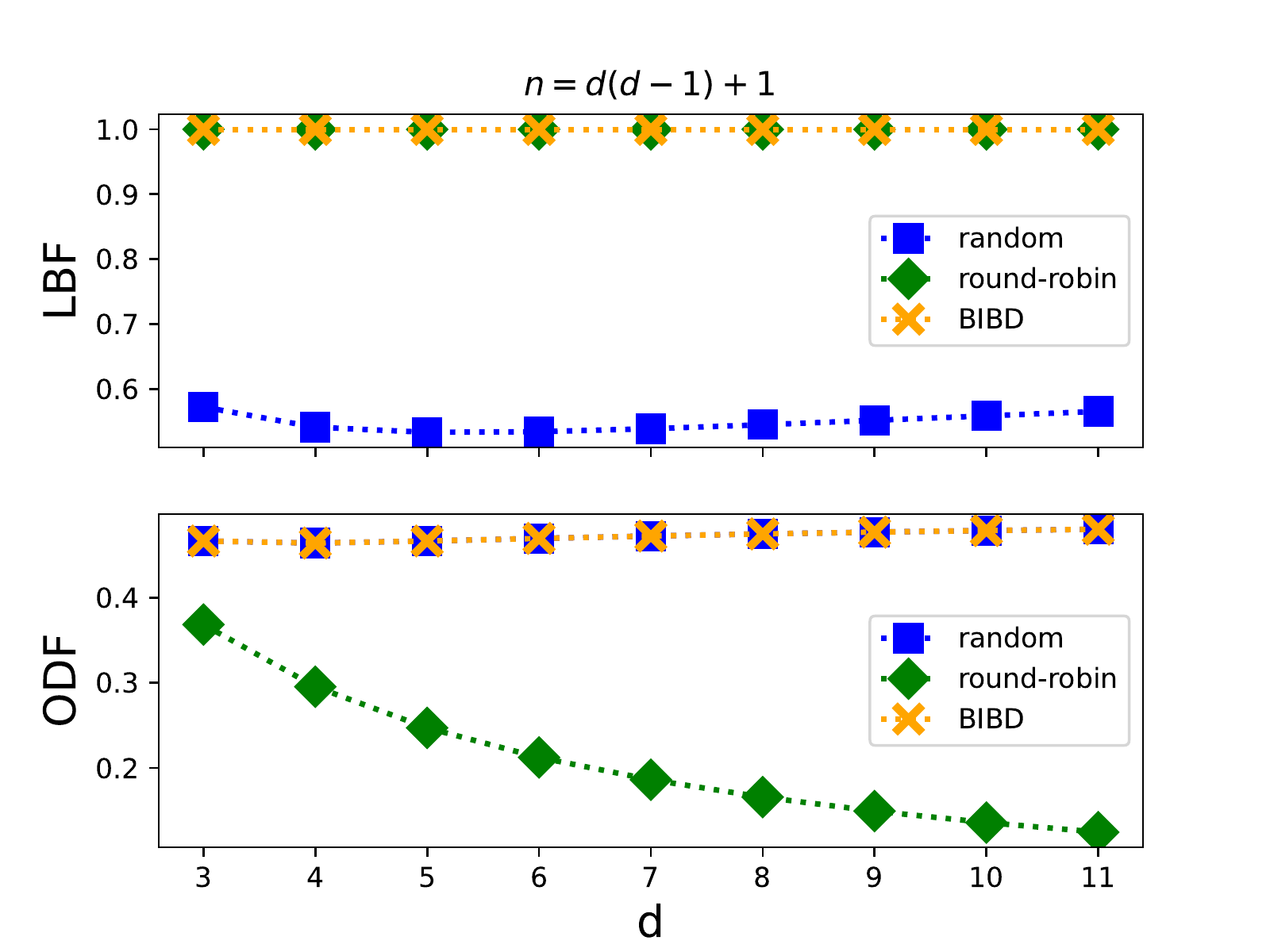}
        \caption{LBF and ODF for random, round-robin and BIBD policies versus $d$, when $n=d(d-1)+1$. Round-robin policy has the largest LBF and the smallest ODF. On the other hand, random policy has the largest ODF and the smallest LBF. BIBD policy has the best of both worlds; maximum LBF and ODF.}
        \label{fig:lbfrdf}
    \end{figure}

\begin{table*}[t]
    \caption{Performance indicators of scheduling policies} 
    \centering 
    {\large
    \begin{tabular}{l c c c} 
        \hline\hline  
        Policy & LBF($\leq 1$) & AOF($\leq 1$) & ODF($\leq 1$)
        \\ [0.5ex]
        \hline\hline 
        Random &  $\textup{max}\left\{0,\frac{\frac{Td}{(d-1)^2+d}-\sqrt{\frac{2Td(d-1)^2\textup{log}((d-1)^2+d)}{((d-1)^2+d)^2}}}{\frac{Td}{(d-1)^2+d}+\sqrt{\frac{2Td(d-1)^2\textup{log}((d-1)^2+d)}{((d-1)^2+d)^2}}}\right\}$ & $\frac{(d-1)^2+d}{d^2}$ & $\frac{(d-1)^2+d}{d(2d-1)}$\\
        Round-robin & 1 & $\frac{(d-1)^2+d}{d^2}$& $\frac{3[(d-1)^2+d]}{d(2d^2+1)}$\\
        BIBD & 1 & $\frac{(d-1)^2+d}{d^2}$ & $\frac{(d-1)^2+d}{d(2d-1)}$\\
        \hline 
    \end{tabular}}
    \label{tab:indicators}
\end{table*}

The performance indicators of the three scheduling policies are provided in Table~\ref{tab:indicators} and are plotted in Fig.~ \ref{fig:lbfrdf}. The load balancing indicator LBF is 1 for round-robin and BIBD policies. These two policies can provide a perfect balance of load across the servers by assigning an equal number of jobs to each server. However, when the service requirements of jobs are diverse, e.g., following a bimodal distribution, the number of jobs is not an effective measure of the actual workload on each server. Thus, the performance benefit of such load balancing schemes vanishes as service requirements get skewed across the jobs. The LBF of the random policy is upper bounded by 1. For a fixed $n$, one can expect the LBF of random policy to increase with $d$. However, with $n=d(d-1)+1$, it stays almost constant as $d$ changes.

When $n$ and $d$ satisfy the necessary condition of the existence of BIBD, the three scheduling policies have identical AOF. The same average overlap in round-robin and BIBD policies is intuitive. That is because, with the same number of balls thrown, the number of balls in the urns is (almost) identical across the policies, which implies that the AOF should also be identical. However, a less intuitive result is the equality of AOF of the random policy with round-robin and BIBD. With random scheduling, the number of balls may not be equal across the urns. Thus, the balls in the maximum loaded urn would have higher overlap (than the average overlap), and those in the minimum loaded bin would have lower overlap. Nevertheless, the average number of overlaps across the balls is the same with round-robin and BIBD, where urns would have (almost) an equal number of balls.

With BIBD, the number of overlapping urns across the balls is either one or $d$. In contrast, with the random policy, the number of overlapping urns across the balls could be any integer in $[0,d]$. Nevertheless, the diversity in the number of overlaps across the balls $\mathbbm{E}[o_X^2]$ is equal for the random and BIBD policies; see ODF plot in Fig.~\ref{fig:lbfrdf}. On the other hand, the round-robin policy results in higher diversity in the number of overlaps, i.e., lower ODF. Even though the number of overlapping urns in the round-robin could be any integer in $[0,d]$ (similar to the random policy), its ODF is lower than the ODF of the random policy. That is because the number of overlaps has different distributions in the two policies, see (\ref{equ:overlapDistRand}) and (\ref{equ:overlapDistRR}).

BIBD scheduling has the best of both worlds; it has the highest LBF (together with the round-robin) and the highest ODF (together with the random). Accordingly, our performance indicators suggest that the BIBD scheduling policy is the best for redundancy queuing systems.

\section{Graph-Based Analysis of Scheduling Policies}\label{sec:graph}
In this section, we connect the overlapping properties of the scheduling policies to the expansion properties of their associated graphs. In particular, we show that a graph constructed by the BIBD policy is a better expander than a graph constructed by the round-robin policy. 

\subsection{Incident Structure of Scheduling Policy}
Let us consider a queuing system with $n$ servers where jobs are assigned to the servers $d$-redundantly. Servers are chosen according to a block design $\CMcal{P}$. Suppose the block design has $n$ blocks, each with $d$ objects. We define the incident structure for such a design as an $n\times n$ binary matrix $\CMcal{M}_{n,d}^\CMcal{P}$, where the entry at row $i$ and column $j$ is $1$ if object $i$ belongs to the block $j$, and is $0$ otherwise. The value of an entry is specified by $\CMcal{P}$. The sum of entries in each row of $\CMcal{M}_{n,d}^{\CMcal{P}}$ is $d$. As an example, the incident structure for the round-robin policy with $n=4$ and $d=2$ is given by,
    \begin{equation}
            \CMcal{M}_{4,2}^{\text{round-robin}}=
            \left[ {\begin{array}{cccc}
            1 & 1 & 0 & 0 \\
            0 & 1 & 1 & 0 \\
            0 & 0 & 1 & 1 \\
            1 & 0 & 0 & 1 \\
            \end{array} } \right].
        \label{rrExample}
    \end{equation}{}
An incident structure $\CMcal{M}_{n,d}^{\CMcal{P}}$ can be interpreted by a left $d$-regular bipartite graph $(V_{n,d}^l\cup V_n^r,E^\CMcal{P})$, such that $V_{n,d}^l$ are the vertices in the left subgraph, $V_n^r$ are the vertices in the right subgraph, $|V_{n,d}^l|=|V_n^r|=n$. Furthermore, $E^\CMcal{P}$ is the set of edges induced by the policy $\CMcal{P}$. As an example, the graph associated with a (7,3,1)-BIBD is given in Fig. \ref{fig:bipartite}.

    \begin{figure}[t]
        \centering
        \includegraphics[width=.5\columnwidth]{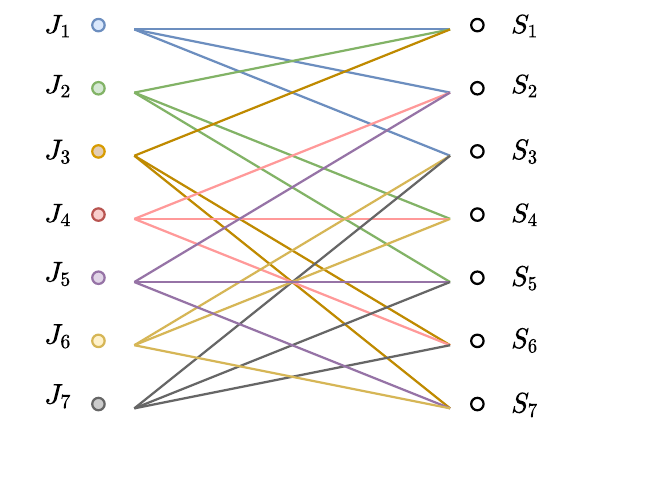}
        \caption{Bipartite graph associated with the incident structure of a (7,3,1)-BIBD. Every pair of two jobs (colors) appear on exactly one server.}
        \label{fig:bipartite}
    \end{figure}

\subsection{Spectral Expansion of Incident Structure}
It is well known that the expansion properties of an incident structure are closely related to the spectral properties of its adjacency matrix. We consider incident structures which are described by $d$-regular bipartite graphs. In particular, a regular bipartite graph is an expander if and only if the second largest eigenvalue of its adjacency matrix is well separated from the first \cite{alon1986eigenvalues}. Next, we formally define an expander structure based on the eigenvalues of its adjacency matrix.

\begin{definition}
Let $\CMcal{G}=(V,E)$ be a $d$-regular graph, where $|V|=n$. Let $\lambda_1\geq\lambda_2\geq\dots\geq\lambda_n$ be the eigenvalues of the adjacency matrix of $\CMcal{G}$.
The spectral gap of the graph $\CMcal{G}$ is defined as $\Delta(\CMcal{G})=d-\max\{\lambda_2,|\lambda_n|\}$.
\end{definition}{}
In general, larger spectral gap corresponds to better expansion. 
When $\CMcal{G}$ is d-regular and \textit{bipartite}, the eigenvalues of the adjacency matrix of $\CMcal{G}$ are symmetric around 0. In other words, the eigenvalues of the adjacency matrix are either 0 or pairs of $\lambda$ and $-\lambda$ \cite{brito2018spectral}. Thus, $\max\{\lambda_2,\lambda_n\}$ has the trivial value of $d$. In such cases, the spectral gap is defined as follows.

\begin{definition}\cite{brito2018spectral}
Let $\CMcal{G}=(V,E)$ be an $d$-regular bipartite graph, where $|V|=n$. Let $\lambda_1\geq\lambda_2\geq\dots\geq-\lambda_2\geq-\lambda_1$ be the eigenvalues of the adjacency matrix of $\CMcal{G}$.
The spectral gap of the graph $\CMcal{G}$ is defined as $\Delta(\CMcal{G})=d-\lambda_2$.
\end{definition}

The associated graphs with round-robin and BIBD scheduling policies are $d$-regular bipartite, with spectral gap $d-\lambda_2$. In the following, we derive the spectral gap of these two policies.

\subsection{Round-robin Structure}
Given the incident matrix $\CMcal{M}_{n,d}^{\text{round-robin}}$, the adjacency matrix is given by
    \begin{equation}
        A^{\text{round-robin}}=\left[\begin{array}{cc}
            \bf{0}_{n\times n} & \CMcal{M}_{n,d}^{\text{round-robin}} \\
             \text{tr}(\CMcal{M}_{n,d}^{\text{round-robin}}) & \bf{0}_{n\times n}
        \end{array}{}
        \right],
    \label{rrAdjacency}
    \end{equation}
where $\text{tr}(\cdot)$ is the transpose operator. The following theorem characterizes the spectral gap of the round-robin structure based on the eigenvalues of $A^{\text{round-robin}}$.
\begin{theorem}
The spectral gap $\epsilon^{\text{round-robin}}$ of the round-robin structure with $d$-regular bipartite incident graph $\CMcal{M}_{n,d}^{\textup{round-robin}}$ is given by,
    \begin{equation}
        \epsilon^{\text{round-robin}}=d-\frac{\sin{d\pi/n}}{\sin{\pi/n}}.
    \label{rrGap}
    \end{equation}
\label{theorem2}
\end{theorem}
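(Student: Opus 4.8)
The plan is to diagonalize the circulant incidence matrix $M:=\CMcal{M}_{n,d}^{\textup{round-robin}}$ by the discrete Fourier transform, identify the eigenvalues of $A^{\textup{round-robin}}$ in (\ref{rrAdjacency}) with the signed singular values of $M$, and then show that the largest nontrivial singular value is attained at the first Fourier mode. First I would note that, up to a relabeling of the blocks (a column permutation, which leaves all singular values unchanged), $M$ is the circulant matrix whose first row is $(\,\underbrace{1,\dots,1}_{d},0,\dots,0\,)$, as exhibited in (\ref{rrExample}). Its eigenvectors are therefore the Fourier vectors, and its eigenvalues are $\lambda_k=\sum_{j=0}^{d-1}\omega^{jk}$ with $\omega=e^{2\pi i/n}$ and $k=0,1,\dots,n-1$. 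Summing the geometric series gives $\lambda_0=d$ and, for $k\neq0$, $\lambda_k=e^{\pi i(d-1)k/n}\,\dfrac{\sin(\pi dk/n)}{\sin(\pi k/n)}$, so the singular values are $\sigma_k=|\lambda_k|=\dfrac{|\sin(\pi dk/n)|}{|\sin(\pi k/n)|}$ for $k\neq0$ and $\sigma_0=d$.

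Since $M$ is circulant it is normal, so its singular values are exactly the moduli of its eigenvalues; and because $A^{\textup{round-robin}}$ is the bipartite block matrix (\ref{rrAdjacency}), its spectrum is precisely $\{\pm\sigma_k\}_{k=0}^{n-1}$. Hence $\lambda_1=\sigma_0=d$ is the Perron value of the $d$-regular graph, and the second largest eigenvalue is $\lambda_2=\max_{1\le k\le n-1}\sigma_k$. It therefore remains to prove that this maximum equals $\sigma_1=\sin(\pi d/n)/\sin(\pi/n)$; the claimed spectral gap then follows immediately from $\Delta=d-\lambda_2$.

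The crux is this last maximization. Writing $\theta_k=\pi k/n$ and using that $d$ is an integer (so $\sin(\pi d(n-k)/n)=\pm\sin(\pi dk/n)$ while $\sin(\pi(n-k)/n)=\sin(\pi k/n)$), one gets $\sigma_k=\sigma_{n-k}$, reducing the problem to $\theta_k\in(0,\pi/2]$. On the principal lobe $\theta_k\in(0,\pi/d]$ I would write $\sigma_k=U_{d-1}(\cos\theta_k)\ge 0$, where $U_{d-1}$ is the Chebyshev polynomial of the second kind; its largest root is $\cos(\pi/d)$, so $U_{d-1}$ is positive and increasing on $[\cos(\pi/d),1]$, whence $\sigma_k$ is decreasing in $k$ there and maximal at $k=1$. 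For $\theta_k\in(\pi/d,\pi/2]$ I would use the envelope bound $\sigma_k\le 1/\sin\theta_k\le 1/\sin(\pi/d)$ and compare it against the lower bound $\sigma_1\ge 2d/\pi$ (from $\sin x\ge(2/\pi)x$ and $\sin x\le x$, valid once $d\le n/2$); since $1/\sin(\pi/d)\le d/2<2d/\pi\le\sigma_1$, no off-lobe mode can exceed $\sigma_1$. The finitely many degenerate cases with $d>n/2$ (e.g.\ $d=2,n=3$) are settled directly by the symmetry $\sigma_k=\sigma_{n-k}$. I expect this two-regime estimate --- certifying that no interior Fourier mode beats the first --- to be the main obstacle, the circulant diagonalization and the bipartite $\pm$-singular-value correspondence being routine.
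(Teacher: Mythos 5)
Your proposal follows the same route as the paper's appendix proof: diagonalize the circulant incidence matrix by the discrete Fourier transform, use normality to identify the singular values with the moduli of the eigenvalues, and use the block form of the adjacency matrix in (\ref{rrAdjacency}) to conclude that the adjacency spectrum is $\{\pm\sigma_k\}_{k=0}^{n-1}$. The one substantive difference is that the paper simply \emph{asserts} (its Lemma~\ref{lem5}) that the second largest adjacency eigenvalue equals $\psi_1$, i.e., that $\max_{1\le k\le n-1}\,|\sin(\pi d k/n)|/\sin(\pi k/n)$ is attained at $k=1$, whereas you correctly flag this maximization as the crux and actually prove it, via monotonicity of $U_{d-1}$ on the principal lobe together with the envelope bound $\sigma_k\le 1/\sin\theta_k$ off the lobe; your two-regime estimate checks out whenever $d\le n/2$, which covers the paper's operating point $n=d(d-1)+1$ for $d\ge 3$. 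For the leftover cases $d>n/2$ (of which there are infinitely many pairs $(n,d)$, not finitely many), the clean reduction is the identity $|\sin(\pi dk/n)|=|\sin(\pi(n-d)k/n)|$ for $k\ne 0$, i.e., $\sigma_k(d)=\sigma_k(n-d)$, which reduces to the already-settled regime; the $k\mapsto n-k$ symmetry you invoke does not by itself dispose of these cases. Modulo that small repair, your argument is correct and is strictly more complete than the paper's own proof.
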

\begin{proof}
See Appendix.
\end{proof}

\begin{corollary}
The spectral gap of the round-robin structure is zero in the limit when $n\rightarrow\infty$.
\end{corollary}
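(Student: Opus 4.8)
The plan is to start directly from the closed form established in Theorem~\ref{theorem2}, namely
\begin{equation*}
    \epsilon^{\text{round-robin}}=d-\frac{\sin(d\pi/n)}{\sin(\pi/n)},
\end{equation*}
and simply evaluate its limit as $n\rightarrow\infty$. Since the first term $d$ is constant in $n$, the entire task reduces to computing the limit of the ratio $\sin(d\pi/n)/\sin(\pi/n)$, so no new structural ideas are needed beyond Theorem~\ref{theorem2}.

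The key step is to show that this ratio tends to $d$. I would substitute $x=\pi/n$, noting that $x\rightarrow 0$ as $n\rightarrow\infty$, and rewrite
\begin{equation*}
    \frac{\sin(d\pi/n)}{\sin(\pi/n)}=\frac{\sin(dx)}{\sin(x)}=\frac{\sin(dx)}{dx}\cdot\frac{x}{\sin(x)}\cdot d.
\end{equation*}
Because $\sin(t)/t\rightarrow 1$ as $t\rightarrow 0$, both the first factor (with $t=dx\rightarrow 0$) and the reciprocal of the second factor tend to $1$, leaving the limit equal to $d$. An equivalent route is a single application of L'Hôpital's rule to the $0/0$ form, giving $d\cos(dx)/\cos(x)\big|_{x=0}=d$. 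Either way, $\epsilon^{\text{round-robin}}\rightarrow d-d=0$.

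The one subtlety I would flag as the main (though minor) obstacle is the dependence of $d$ on $n$. If $d$ is treated as fixed, then $dx=d\pi/n\rightarrow 0$ trivially and the argument above is immediate. In the regime relevant to the paper's comparisons, however, $n$ and $d$ are coupled through the BIBD necessary condition $n=d(d-1)+1$, so $d$ grows with $n$; here I would verify that $d\pi/n\rightarrow 0$ still holds, which follows since $d/n\sim 1/d\rightarrow 0$, keeping the small-angle argument valid. To make the rate explicit if desired, a second-order Taylor expansion $\sin(t)=t-t^3/6+O(t^5)$ yields
\begin{equation*}
    \epsilon^{\text{round-robin}}=\frac{d(d^2-1)\pi^2}{6n^2}+O\!\left(\frac{1}{n^3}\right),
\end{equation*}
which confirms convergence to zero (at rate $\Theta(1/d)$ under the BIBD coupling, and $\Theta(1/n^2)$ for fixed $d$) and completes the proof.
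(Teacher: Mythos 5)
Your proof is correct and takes the same route the paper implicitly intends: the corollary is stated without proof, but your limit computation $\sin(d\pi/n)/\sin(\pi/n)\rightarrow d$ is exactly the calculation the paper itself performs for the subsequent theorem on the $d\rightarrow\infty$ regime. Your extra care in distinguishing the fixed-$d$ case from the BIBD-coupled case $n=d(d-1)+1$, and the explicit $\Theta(1/n^2)$ rate, go slightly beyond what the paper records but introduce no gap.
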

According to the behaviour of the spectral gap, the round-robin structure is not asymptotically an expander. As it will be shown later in this section, this structure is not an expander even when $n=d(d-1)+1$. This result has been also confirmed by the performance metric ODF (see Fig.~\ref{fig:lbfrdf}), as it decreases as $d$ grows large. Therefore, round-robin policy provides more even overlaps when $d$ is small. 

\subsection{BIBD Structure}

It is known that BIBDs have largest spectral gap among all d-regular bipartite graphs and thus are optimal expanders \cite{hoholdt2009optimal}. We restate the following theorem from \cite{hoholdt2009optimal}.
\begin{theorem}
The spectral gap of an $(n,d,1)-BIBD$ construction is
    \begin{equation}
        \epsilon^{\text{BIBD}}=d-\sqrt{d-1}.
    \label{bibdGap}
    \end{equation}{}
\end{theorem}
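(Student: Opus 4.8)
The plan is to reduce the eigenvalue computation for the bipartite adjacency matrix to a singular-value computation for the incidence matrix $M=\CMcal{M}_{n,d}^{\textup{BIBD}}$, and then exploit the defining combinatorial balance of the design. The adjacency matrix of the associated graph has the block-antidiagonal form of (\ref{rrAdjacency}) with $M$ and $M^T$ in place of the round-robin blocks. For any such bipartite matrix the eigenvalues are exactly $\pm\sigma$, where $\sigma$ runs over the singular values of $M$; equivalently, the squares of the eigenvalues of $A^{\textup{BIBD}}$ are the eigenvalues of $MM^T$. Hence it suffices to compute the spectrum of $MM^T$, isolate its second-largest eigenvalue, and take $\lambda_2$ to be its square root.

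First I would convert the three BIBD axioms into a matrix identity. Since the design is symmetric, every object belongs to exactly $d$ blocks, so the diagonal entries of $MM^T$ equal $d$; since every pair of distinct objects is contained in exactly $\lambda=1$ block, every off-diagonal entry equals $1$. Therefore
\begin{equation}
    MM^T=(d-1)I_n+J_n,
    \label{MMt}
\end{equation}
with $I_n$ the identity and $J_n$ the all-ones matrix. This identity is the heart of the argument: it expresses the balance of the design as a rank-one perturbation of a scalar matrix, whose spectrum is transparent.

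Next I would read off the eigenvalues of (\ref{MMt}). The all-ones vector is a common eigenvector of $I_n$ and $J_n$, giving $MM^T$ the eigenvalue $(d-1)+n$ once, while on the orthogonal complement $J_n$ vanishes, giving the eigenvalue $d-1$ with multiplicity $n-1$. The BIBD existence condition $n=d(d-1)+1$ yields $(d-1)+n=d^2$, so the singular values of $M$ are $d$ (once) and $\sqrt{d-1}$ ($n-1$ times). Consequently the eigenvalues of $A^{\textup{BIBD}}$ are $\pm d$ and $\pm\sqrt{d-1}$, so that $\lambda_1=d$, $\lambda_2=\sqrt{d-1}$, and $\epsilon^{\textup{BIBD}}=d-\lambda_2=d-\sqrt{d-1}$.

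The spectral bookkeeping is immediate once (\ref{MMt}) is established, so the only substantive step is that identity itself: I would be most careful in justifying that the symmetry of the design forces each object to lie in exactly $d$ blocks (making the diagonal equal to $d$) and that $\lambda=1$ controls every off-diagonal entry, together with the standard correspondence between the eigenvalues of a bipartite adjacency matrix and the singular values of its biadjacency block. Positive definiteness of $(d-1)I_n+J_n$ guarantees $M$ has full rank, so no spurious zero eigenvalues arise and all $2n$ eigenvalues are accounted for by $\pm d,\pm\sqrt{d-1}$.
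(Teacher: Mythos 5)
Your proof is correct, but it is worth noting that the paper itself does not prove this theorem at all: it simply restates the result from H{\o}holdt and Janwa's work on optimal expanders and cites it, offering no argument. Your proposal supplies the standard self-contained derivation that the paper omits, and every step checks out: for a symmetric $(n,d,1)$-BIBD the replication number equals the block size $d$, so $MM^T=(d-1)I_n+J_n$; the spectrum of this rank-one perturbation is $(d-1)+n$ once and $d-1$ with multiplicity $n-1$; the symmetry condition $n=d(d-1)+1$ turns the top eigenvalue into $d^2$; and the bipartite correspondence between eigenvalues of the adjacency matrix and singular values of the biadjacency block gives $\lambda_2=\sqrt{d-1}$ and hence $\epsilon^{\text{BIBD}}=d-\sqrt{d-1}$. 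This is also structurally parallel to how the paper handles the round-robin case in its appendix (reducing the adjacency spectrum to the singular values of the incidence matrix), so your argument fits the paper's framework more tightly than the citation does. The only point deserving a line of justification, which you already flag, is why each object lies in exactly $d$ blocks: in any BIBD the replication number is $r=\lambda(\nu-1)/(k-1)$, which with $\lambda=1$ and $\nu=d(d-1)+1$ gives $r=d$. The remark about positive definiteness is harmless but unnecessary, since the eigenvalue count $\pm d$ once and $\pm\sqrt{d-1}$ with multiplicity $n-1$ already accounts for all $2n$ eigenvalues regardless of rank considerations (and $d-1>0$ for any nontrivial design anyway).
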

The following theorem gives the asymptotic behavior of the spectral gap of round-robin and BIBD structures.
\begin{theorem}
The spectral gap of the round-robin incident structure, with parameters $n$ and $d$ such that $n=d(d-1)+1$, is asymptotically zero in the limit $d\rightarrow\infty$, while the spectral gap of BIBD incident structure grows indefinitely in the limit as $d\rightarrow\infty$.
\end{theorem}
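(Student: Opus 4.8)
The plan is to handle the two structures separately, since each spectral gap is given in closed form: the BIBD gap by \eqref{bibdGap} and the round-robin gap by \eqref{rrGap}. For the BIBD structure the claim is immediate. From $\epsilon^{\text{BIBD}}=d-\sqrt{d-1}$ I would simply bound $\sqrt{d-1}\le\sqrt{d}$, so that $\epsilon^{\text{BIBD}}\ge d-\sqrt{d}=\sqrt{d}\,(\sqrt{d}-1)$, which tends to $\infty$ as $d\to\infty$. Hence the BIBD spectral gap grows without bound, establishing the second half of the statement.

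For the round-robin structure I would substitute the relation $n=d(d-1)+1=d^2-d+1$ into \eqref{rrGap} and analyze the ratio of sines asymptotically. Writing $\theta=\pi/n$, the two arguments appearing in the formula are $\theta$ and $d\theta$. The key observation, which relies crucially on $n$ being quadratic in $d$, is that \emph{both} arguments tend to $0$: indeed $\theta=\pi/(d^2-d+1)\to 0$ and $d\theta=d\pi/(d^2-d+1)\sim\pi/d\to 0$. This is what permits replacing each sine by its Taylor expansion about $0$.

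The central step is to expand $\sin(d\theta)/\sin\theta$ to second order. Using $\sin x = x-x^3/6+O(x^5)$ gives $\sin(d\theta)/\sin\theta = d\,(1-(d\theta)^2/6+\cdots)/(1-\theta^2/6+\cdots)=d-\tfrac{d(d^2-1)}{6}\theta^2+\cdots$. Substituting $\theta^2=\pi^2/(d^2-d+1)^2\sim\pi^2/d^4$ shows the correction term behaves like $\pi^2/(6d)$. Consequently $\epsilon^{\text{round-robin}}=d-\sin(d\theta)/\sin\theta\sim\pi^2/(6d)\to 0$ as $d\to\infty$, which is the first half of the statement (and is consistent with the earlier corollary that the gap vanishes as $n\to\infty$).

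The only delicate point, and the step where care is needed to make the argument rigorous rather than merely formal, is controlling the error terms in the quotient of the two Taylor series. The numerator's argument $d\theta$, although tending to zero, is a factor of $d$ larger than the denominator's argument $\theta$, so one must verify that the neglected higher-order contributions survive division by $\sin\theta\sim\theta\sim 1/d^2$ and multiplication by the leading factor $d$ while remaining negligible. Tracking the orders of magnitude $\theta\sim 1/d^2$ and $d\theta\sim 1/d$ shows the discarded terms are $O(1/d^3)$, smaller than the $O(1/d)$ correction, so the estimate is justified and the conclusion follows.
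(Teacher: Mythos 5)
Your proposal is correct and follows essentially the same route as the paper: both halves are read off from the closed-form gaps \eqref{bibdGap} and \eqref{rrGap}, with the round-robin limit obtained by substituting $n=d(d-1)+1$ and using the small-angle behaviour of the two sines. Your second-order Taylor expansion actually makes rigorous (and quantifies, via the rate $\pi^2/(6d)$) a step the paper's proof treats only formally as ``$\lim_{d\to\infty} d-d=0$.''
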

\begin{proof}
From (\ref{rrGap}),
    \begin{equation*}
        \begin{split}
            \lim_{d\rightarrow\infty} \epsilon^{\text{round-robin}}&=\lim_{d\rightarrow\infty}d-\frac{\sin{d\pi/n}}{\sin{\pi/n}}\\
            &=\lim_{d\rightarrow\infty}d-\frac{\sin{d\pi/[d(d-1)+1]}}{\sin{\pi/[d(d-1)+1]}}\\
            &=\lim_{d\rightarrow\infty}d-d\\
            &=0.
        \end{split} 
    \end{equation*}
And the indefinite growth of the spectral gap of BIBD incident structure is clear from (\ref{bibdGap}).
\end{proof}

    \begin{figure}[t]
        \centering
        \includegraphics[width=.6\columnwidth,keepaspectratio]{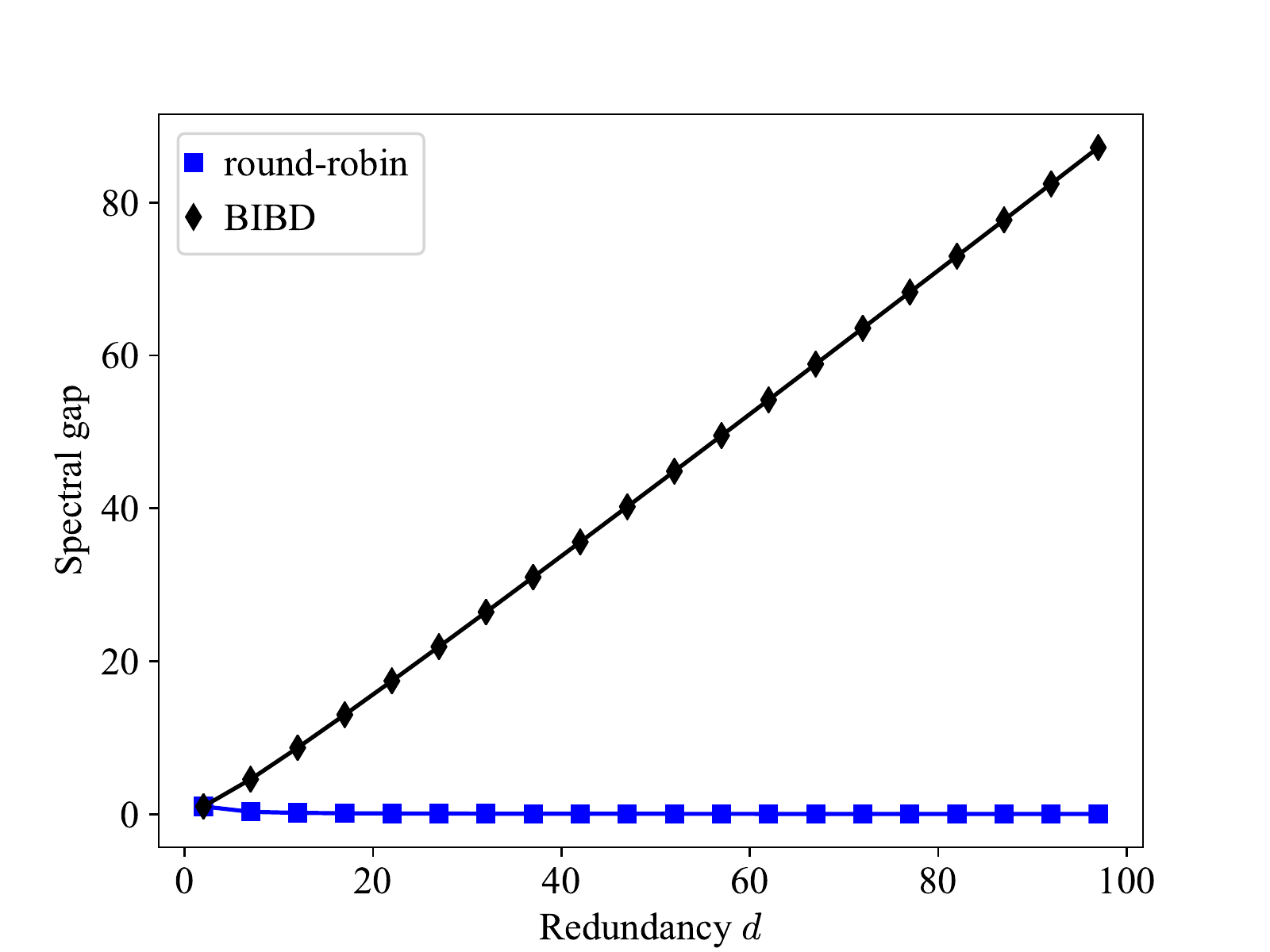}
        \caption{Spectral gap of round-robin and BIBD policies. The spectral gap of BIBD increases monotonically with $d$. The spectral gap of round-robin goes to zero with $d$.}
        \label{fig:specral_gap}
    \end{figure}

From (\ref{rrGap}), as $d$ gets larger, the spectral gap of round-robin construction will shrink. This is inline with our performance indicators, in that ODF decreases with $d$.

\section{Simulation Results}\label{sec:simResult}
In this section we provide simulations results for the average queuing time with the random, round-robin and BIBD scheduling policies. In the simulated environment, there are $n$ servers and a job is scheduled to $d$ different servers, selected according to a given scheduling policy. The redundant copies of a job get cancelled once the first copy starts the service. We assume job cancellations are instantaneous. If there is an idle server among the selected servers, an arriving job would enter the service instantaneously, with no more redundant copies. In that case, if there are more than one idle server, ties are broken arbitrarily.

The jobs' service times are sampled from a 2-phase hyper-exponential distribution. That is, the service time of an arriving job is sampled from a slow exponential distribution, i.e., the job requires a long service, with probability $p$ and from a fast exponential distribution, i.e. the job requires short service, with probability of $1-p$. The probability distribution of a 2-phase hyper-exponential distribution is given by,
    \begin{equation*}
        f_S(s)=(1-p)f_{S_1}(s)+pf_{S_2}(s),
    \end{equation*}{}
where,
    \begin{equation*}
        f_{S_i}(s) =
            \begin{cases}
                \mu_ie^{-\mu_is} &\quad s\geq 0,\\
                0 & s<0,
            \end{cases}
    \end{equation*}{}
and $p$ is the probability that a job requires long service. This behaviour of jobs' service requirements is observed in practice, e.g., Google Traces \cite{chen2010analysis}.

We denote by $q$ the ratio of the average service time of a long job $1/\mu_2$ to that of a short job $1/\mu_1$. It can be verified that the average service requirement on the system is $\mu=(1-p(1-1/q))$. For brevity, we assume $\mu_1=1$. Job arrival follows a Poisson process with inter-arrival rate of $\lambda$. We further denote by $\rho=\lambda/\mu$ the average load on the system. We provide simulation results for $\rho\in[0,1)$, i.e., the stability region of the queuing system.

Each figure in this section consists of two subfigures. The left sub-figure shows the results for the low to medium load regime and the right sub-figure shows the results for medium to high load regime. The set of parameters for each plot is given by the 4-tuple, $(n,d,q,p)$.

Fig. \ref{fig:13_4_10_10} shows the average queuing times, with $(n,d,q,p)=(13,4,10,0.1)$. In all arrival rates, BIBD policy outperforms both random and round-robin. The improvement ranges from about $10\%$ in the high load regime up to $20\%$ in lower to medium loads. With $(n,d,q,p)=(21,5,10,0.1)$, Fig. \ref{fig:21_5_10_10} shows that the relative performance of random and BIBD policies is almost the same as that in Fig. \ref{fig:13_4_10_10}. However, the gap between the round-robin and BIBD increases. This is consistent with the behaviour of ODF in Fig. \ref{fig:lbfrdf} in that it reduces with $d$. This behaviour could be further justified by the fact that large values of $d$ results in more overlap in the system. Therefore, with larger $d$ a less overlap-aware policy would perform poorer. In Fig. \ref{fig:13_4_10_10}, the BIBD policy reduces the queuing time by $10\%$ and $25\%$ compared to random and round-robin, respectively.

    \begin{figure}[t]
        \centering
        \includegraphics[width=.8\columnwidth,trim={1.8cm 0 2.7cm 0},clip]{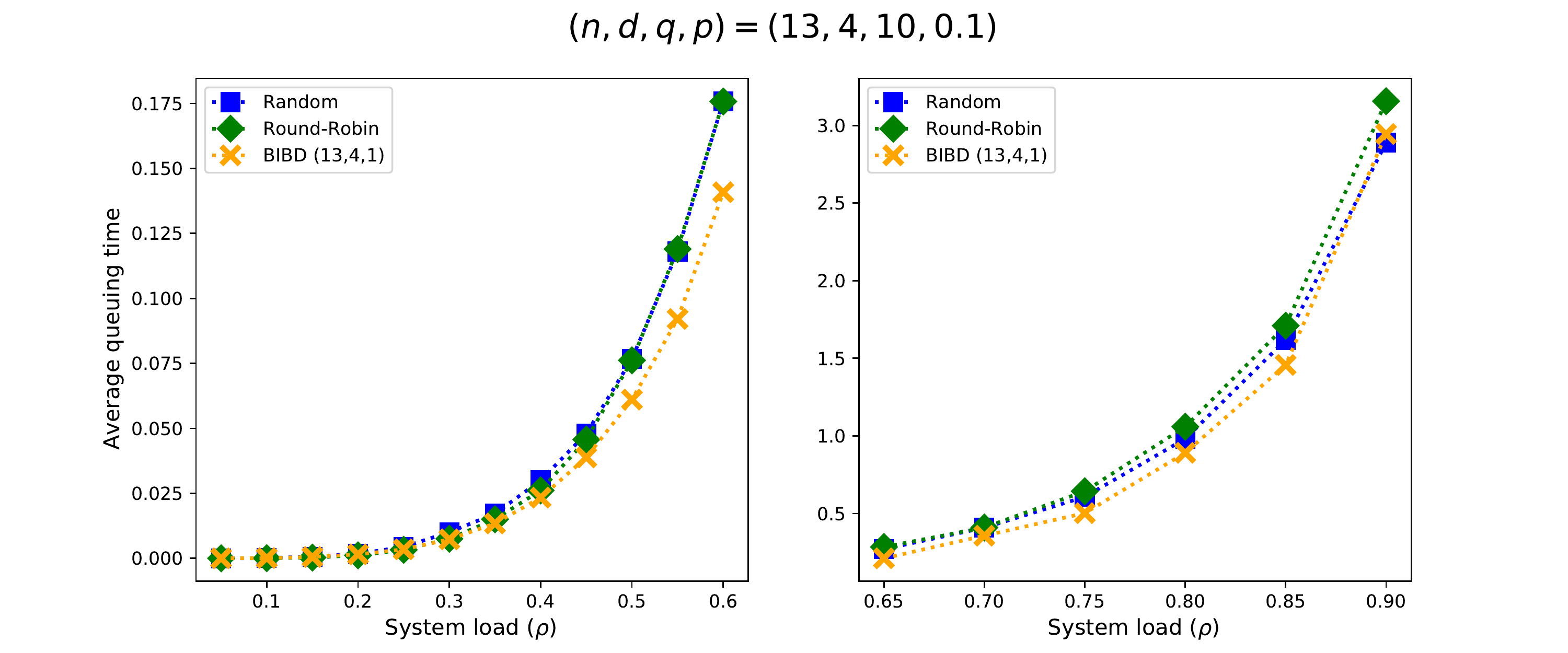}
        \caption{Average queuing time for nonadaptive scheduling policies, with $(n,d,q,p)=(13,4,10,0.1)$. }
        \label{fig:13_4_10_10}
    \end{figure}
    
    \begin{figure}[t]
        \centering
        \includegraphics[width=.8\columnwidth,trim={1.8cm 0 2.7cm 0},clip]{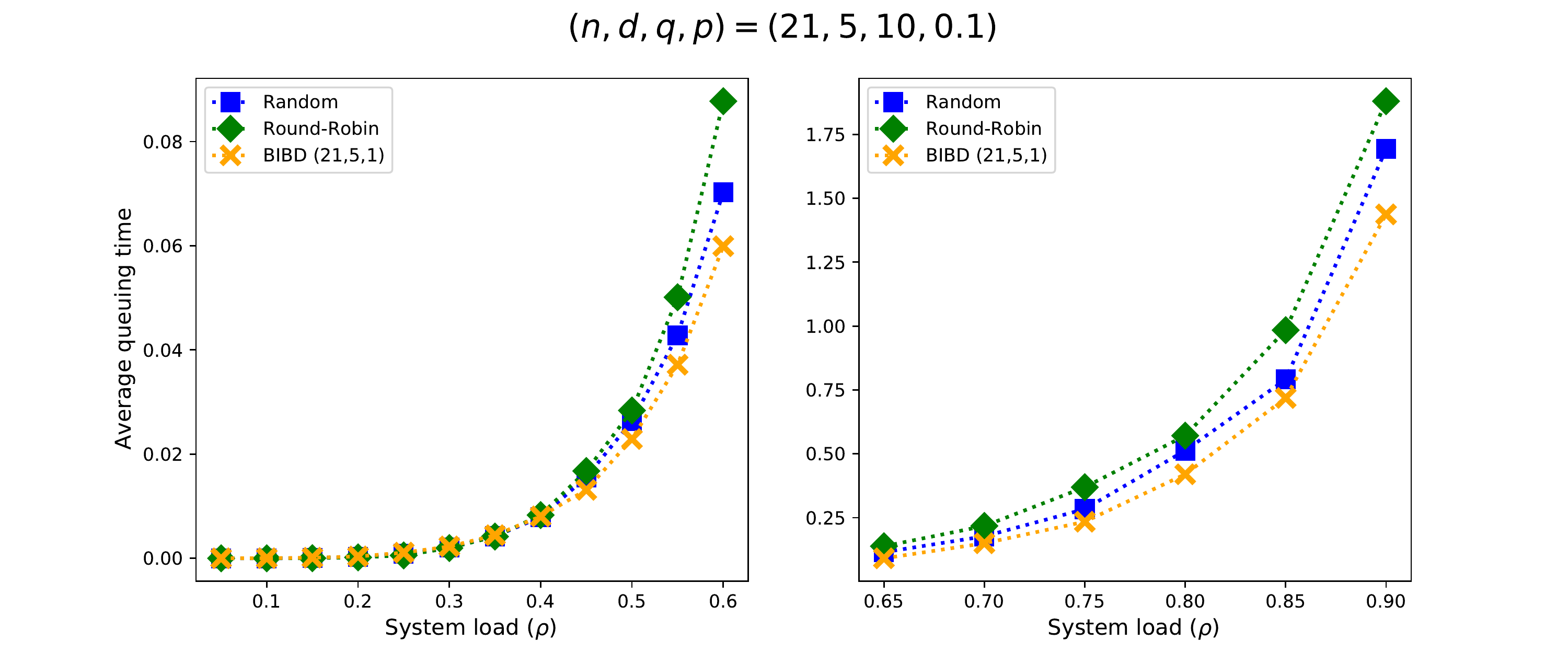}
        \caption{Average queuing time for nonadaptive scheduling policies, with $(n,d,q,p)=(21,5,10,0.1)$. }
        \label{fig:21_5_10_10}
    \end{figure}

Next, we keep $n$, $d$ and $p$ fixed and only increase $q$. We would like to observe the effect of more diverse service requirements on the performance of the scheduling policies. In Fig. \ref{fig:21_5_50_10}, we set $(n,d,q,p)=(21,5,50,0.1)$. It can be seen that the performance gap between round-robin and BIBD/random increases. This observation holds for both low and high traffic regimes. With this set of parameters, BIBD reduces the queuing time by $100\%$ when compared to round-robin policy. It is worth mentioning that, although round-robin is an effective policy for load balancing, when the probability of long jobs is small and their average service requirement is large, load balancing without considering the service requirement of jobs is not effective. However, when the probability of large jobs is high, Fig. \ref{fig:21_5_50_50} with $(n,r,q,p)=(21,5,50,0.5)$, round-robin performs closer to BIBD. Furthermore, the random policy, which is more effective in handling overlaps, has a closer performance to BIBD when $p$ is smaller. This happens because, with small $p$ long jobs are not frequent and even BIBD, which balances the average load, fails to provide load balancing. Therefore, the load balancing capability of a scheduling policy has small impact on its performance when $p$ is small. In Fig. \ref{fig:21_5_50_50}, BIBD outperforms random and round-robin policies by up to $25\%$ and $50\%$, respectively.
    \begin{figure}[t]
        \centering
        \includegraphics[width=.8\columnwidth,trim={1.8cm 0 2.7cm 0},clip]{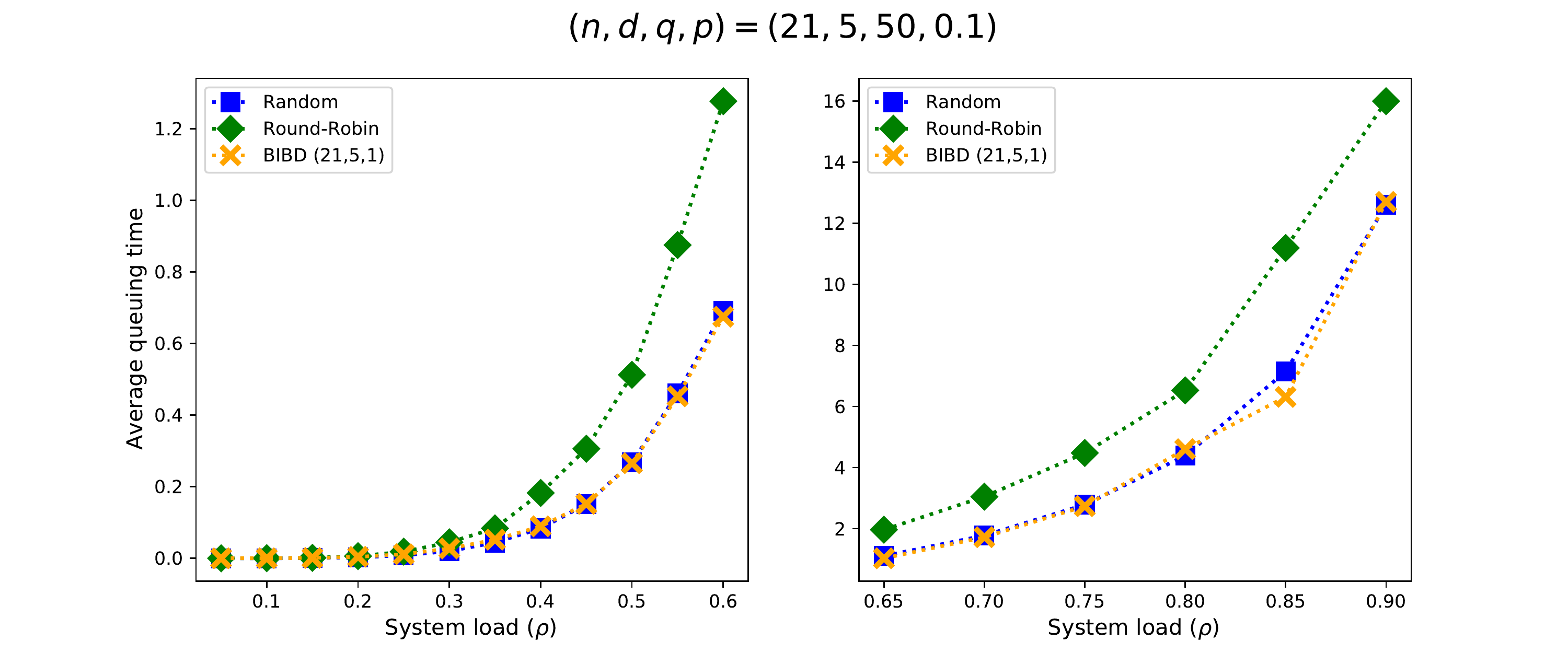}
        \caption{Average queuing time for nonadaptive scheduling policies, with $(n,d,q,p)=(21,5,50,0.1)$. }
        \label{fig:21_5_50_10}
    \end{figure}
    
    \begin{figure}[t]
        \centering
        \includegraphics[width=.8\columnwidth,trim={1.8cm 0 2.7cm 0},clip]{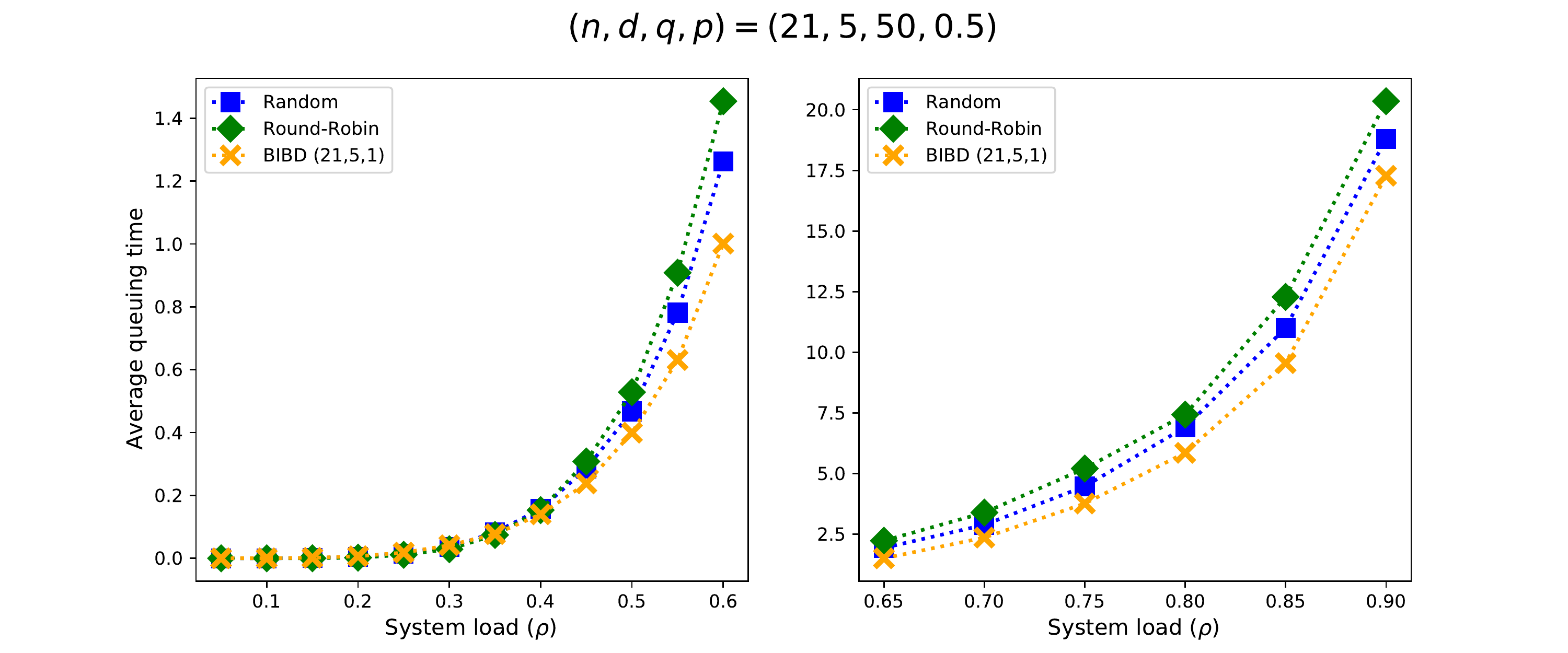}
        \caption{Average queuing time for nonadaptive scheduling policies, with $(n,d,q,p)=(21,5,50,0.5)$. }
        \label{fig:21_5_50_50}
    \end{figure}

\section{Conclusion and Future Directions}\label{sec:conclusion}
We studied scheduling policies that nonadaptively replicate each job to queues in $d$ out of $n$ servers. Only the replica that first reaches the head of the queue gets executed. We argued that the overlaps between job replicas on the servers are good indicators of these systems' queueing time performance. We first derived performance indicators based on the statistics of the overlaps for two classical policies: random and round-robin. We then proposed a nonadaptive scheduling policy based on a combinatorial block design. The performance indicators of the block design-based policy were superior to those of the random and round-robin policies. We further investigated the pattern of overlapping servers for the round-robin and block design-based policies by studying the expansion properties of their associated graphs. It turns out that the better the expansion property of a scheduling policy's graph, the better the performance of the policy in queuing systems with redundancy. Finally, by simulating the studied queuing system, we observed that the proposed scheduling policy achieves a lower average queuing time than the random and round-robin policies.

This work can be extended in multiple directions. The most interesting and maybe the hardest problem would be to derive a closed form for the average performance of block design-based scheduling policy, and to compare it with the performance of the random policy derived in \cite{gardner2017redundancy}. Another direction may consider designing scheduling policies based on other combinatorial designs. The dependencies between the parameters of a combinatorial design could impose restrictions on the system parameters, and lifting such restrictions could also be a topic for future research.

\newpage
\bibliographystyle{IEEEtran}
\bibliography{ref}

\newpage

\appendix

\subsection*{Proof of Theorem \ref{theorem2}:}

1) We first observe that eigenvalues of round-robin incident structure with parameters $n$ and $d$, can be written as \cite{gray2006toeplitz},
    \begin{equation}
        \phi_m=\sum_{l=0}^{d-1}e^{-j2\pi ml/n}, \quad m=0,1,\dots,n-1.
    \label{rrEigenValues}
    \end{equation}{}
    
2) We next use the following lemmas to connect the eigenvalues of the incident structure to the eigenvalues of its adjacency matrix.


\begin{lemma}
Round-robin policy is \ul{normal}, that is, its incident structure satisfies $\CMcal{M}\CMcal{M}^{T}=\CMcal{M}^{T}\CMcal{M}$.
\label{lem3}
\end{lemma}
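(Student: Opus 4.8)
The plan is to recognize the round-robin incident matrix $\CMcal{M}_{n,d}^{\text{round-robin}}$ as a \emph{circulant} matrix and then to invoke the elementary fact that every circulant matrix commutes with its transpose. First I would read off from the construction---and from the worked example \eqref{rrExample}---that the objects contained in block $j+1$ form the cyclic shift by one position of those in block $j$; equivalently, the entry $(\CMcal{M}_{n,d}^{\text{round-robin}})_{ij}$ depends only on $i-j \bmod n$. This is exactly the defining property of a circulant matrix, so $\CMcal{M}_{n,d}^{\text{round-robin}}$ has the canonical circulant form, determined entirely by its first column.

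The key algebraic step is to write $\CMcal{M} \coloneqq \CMcal{M}_{n,d}^{\text{round-robin}}$ as a polynomial in the basic cyclic-shift permutation matrix $P$ (ones on the first subdiagonal and in the top-right corner). Since each object lies in $d$ consecutive blocks, one has $\CMcal{M} = \sum_{l \in S} P^{l}$ for a fixed set $S$ of $d$ residues in $\{0,1,\dots,n-1\}$. Taking transposes and using $P^{T} = P^{-1} = P^{\,n-1}$ gives $\CMcal{M}^{T} = \sum_{l \in S} P^{\,n-l}$, which is again a polynomial in $P$. Because distinct powers of $P$ commute ($P^{a}P^{b} = P^{a+b} = P^{b}P^{a}$), the two polynomials $\CMcal{M}$ and $\CMcal{M}^{T}$ commute, which is precisely the asserted identity $\CMcal{M}\CMcal{M}^{T} = \CMcal{M}^{T}\CMcal{M}$.

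Alternatively---and more in keeping with the eigenvalue computation \eqref{rrEigenValues} already used in this appendix---I would diagonalize $\CMcal{M} = F D F^{*}$ by the unitary Fourier matrix $F$, with $D = \operatorname{diag}(\phi_0,\dots,\phi_{n-1})$ the circulant eigenvalues of \eqref{rrEigenValues}. Since $\CMcal{M}$ is real, $\CMcal{M}^{T} = \CMcal{M}^{*} = F \overline{D} F^{*}$, and as diagonal matrices commute we obtain $\CMcal{M}\CMcal{M}^{T} = F D \overline{D} F^{*} = F \overline{D} D F^{*} = \CMcal{M}^{T}\CMcal{M}$. This route has the advantage of reusing the spectral data that the surrounding proof of Theorem \ref{theorem2} needs anyway.

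I do not anticipate a genuine difficulty, since normality of circulant matrices is standard; the single point requiring care is to confirm that the incident matrix really is circulant for the regime of interest. For the window-of-$d$-consecutive-objects description exhibited in \eqref{rrExample} this is immediate. Should one instead start from the advance-by-$d$ form of the round-robin rule, I would note that the relevant parameters satisfy $\gcd(d,n)=1$ (indeed $n = d(d-1)+1 \equiv 1 \bmod d$), so a harmless relabelling of the blocks by the bijection $j \mapsto jd \bmod n$ restores the dependence on $i-j \bmod n$; once this reduction is in place the normality conclusion follows with no further computation.
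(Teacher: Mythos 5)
Your main argument is correct and supplies exactly the step the paper leaves implicit: Lemma~\ref{lem3} is stated without proof, and the surrounding appendix (the eigenvalue formula (\ref{rrEigenValues}), the citation of \cite{gray2006toeplitz}, and the example (\ref{rrExample})) silently treats $\CMcal{M}_{n,d}^{\text{round-robin}}$ as the circulant generated by a window of $d$ consecutive ones. Writing that matrix as $\sum_{l\in S}P^{l}$ for the cyclic shift $P$, or diagonalizing it by the Fourier matrix, immediately gives $\CMcal{M}\CMcal{M}^{T}=\CMcal{M}^{T}\CMcal{M}$; either of your two routes is a complete and standard proof for that form of the matrix.

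The last paragraph of your proposal, however, contains a genuine error. If one starts from the advance-by-$d$ description of round-robin in Section~\ref{subsec:scheduling_policies} (job $i$ occupies servers $(i-1)d+1,\dots,(i-1)d+d \bmod n$), then the incidence matrix is $\CMcal{M}=C\Pi$ with $C$ the circulant and $\Pi$ the column permutation $j\mapsto jd\bmod n$, and this relabelling is \emph{not} harmless for normality: $\CMcal{M}\CMcal{M}^{T}=CC^{T}$ while $\CMcal{M}^{T}\CMcal{M}=\Pi^{T}C^{T}C\,\Pi$, and a one-sided permutation does not pass through the conjugation. Concretely, for $n=7$, $d=3$ the advance-by-$d$ blocks are $B_1=\{1,2,3\}$, $B_2=\{4,5,6\}$, $B_3=\{7,1,2\},\dots$, so $(\CMcal{M}^{T}\CMcal{M})_{12}=|B_1\cap B_2|=0$ whereas $(\CMcal{M}\CMcal{M}^{T})_{12}=\#\{j:1\in B_j,\,2\in B_j\}=2$; that matrix is not normal. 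What the permutation does preserve is the multiset of singular values of $\CMcal{M}$, which is all that the proof of Theorem~\ref{theorem2} actually uses to obtain the eigenvalues of the adjacency matrix (\ref{rrAdjacency}), so the spectral-gap conclusion is unaffected. But you should either restrict the normality claim to the sliding-window (circulant) incidence matrix --- the form the paper in fact uses in (\ref{rrExample}) and (\ref{rrEigenValues}) --- or bypass normality entirely and argue directly with singular values; the ``relabel and conclude'' step as written would fail.
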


Since the round-robin policy is normal, the singular values of its incident structure are the absolute values of its eigenvalues. Lemma \ref{lem4} follows from \eqref{rrEigenValues} and Lemma \ref{lem3}.

\begin{lemma}
The singular values of the incident structure of round-robin policy, with parameters $n$ and $d$, are given by
    \begin{equation}
        \psi_m=\left|\sum_{l=0}^{d-1}e^{-j2\pi ml/n}\right|, \quad m=0,1,\dots,n-1.
    \end{equation}{}
\label{lem4}
\end{lemma}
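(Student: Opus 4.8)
The plan is to invoke the standard linear-algebra fact that the singular values of a \emph{normal} matrix coincide with the absolute values of its eigenvalues, and then combine this with the two ingredients the surrounding text has already assembled: the explicit eigenvalues in \eqref{rrEigenValues} and the normality established in Lemma~\ref{lem3}. Since the statement itself only asserts that the singular values are $\psi_m=\lvert\phi_m\rvert$, essentially all of the work is already done, and the proof should be a short deduction.

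First I would recall why normality forces the singular values to equal $\lvert\phi_m\rvert$. Because $\CMcal{M}$ is real but generally non-symmetric---it is a circulant matrix carrying a shifted band of ones, as the $4\times 4$ instance in \eqref{rrExample} illustrates---its eigenvalues are complex and one must work over $\mathbb{C}$ with the conjugate transpose. By the spectral theorem for normal matrices, the identity $\CMcal{M}\CMcal{M}^{T}=\CMcal{M}^{T}\CMcal{M}$ of Lemma~\ref{lem3} guarantees a unitary diagonalization $\CMcal{M}=U\Lambda U^{*}$, where $\Lambda=\operatorname{diag}(\phi_0,\dots,\phi_{n-1})$ collects the eigenvalues from \eqref{rrEigenValues} and $U$ is unitary. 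Concretely the diagonalizing basis is the discrete Fourier basis, consistent with the circulant structure that produced $\phi_m$ in the first place.

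Next I would compute $\CMcal{M}^{*}\CMcal{M}=U\Lambda^{*}U^{*}U\Lambda U^{*}=U\operatorname{diag}(\lvert\phi_0\rvert^2,\dots,\lvert\phi_{n-1}\rvert^2)U^{*}$, so that the eigenvalues of $\CMcal{M}^{*}\CMcal{M}$ are exactly $\lvert\phi_m\rvert^2$. Since the singular values are by definition the nonnegative square roots of the eigenvalues of $\CMcal{M}^{*}\CMcal{M}$, it follows immediately that $\psi_m=\lvert\phi_m\rvert=\left\lvert\sum_{l=0}^{d-1}e^{-j2\pi ml/n}\right\rvert$, which is the claimed formula.

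There is no substantive obstacle here: the entire content is packaged in Lemma~\ref{lem3} and \eqref{rrEigenValues}, and the remaining argument is a one-line consequence of the spectral theorem. The only point requiring care is conceptual rather than computational---one must not conflate eigenvalues with singular values for a non-symmetric real matrix. It is precisely normality, the equality of $\CMcal{M}\CMcal{M}^{T}$ and $\CMcal{M}^{T}\CMcal{M}$, that licenses the identification $\psi_m=\lvert\phi_m\rvert$; without it the singular values would not in general be the moduli of the eigenvalues. In this sense Lemma~\ref{lem3} is doing the essential work, and the present lemma is merely its corollary.
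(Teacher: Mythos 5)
Your argument is correct and is essentially the same as the paper's: the paper likewise deduces the lemma in one line from the normality of the incident structure (Lemma~\ref{lem3}) together with the eigenvalue formula \eqref{rrEigenValues}, using the fact that for a normal matrix the singular values are the moduli of the eigenvalues. You merely spell out the unitary-diagonalization step that the paper leaves implicit.
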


It is well known that the eigenvalues of (\ref{rrAdjacency}) are symmetric with respect to zero, and that the positive eigenvalues are the singular values of $\CMcal{M}_{\text{round-robin}}$. Accordingly, Lemma \ref{lem5} follows.

\begin{lemma}
The second largest eigenvalue of matrix (\ref{rrAdjacency}) is given by,
    \begin{equation}
    \begin{split}
        \psi_1&=\left|\sum_{l=0}^{d-1}e^{-j2\pi l/n}\right|\\
            &=\frac{\sin{d\pi/n}}{\sin{\pi/n}}.
    \end{split}{}
    \end{equation}{}
\label{lem5}
\end{lemma}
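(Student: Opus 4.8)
The plan is to reduce the eigenvalue problem for the adjacency matrix in \eqref{rrAdjacency} to a statement about the singular values of the round-robin incident matrix, and then to pin down the second largest singular value in closed form. Write $A$ for the matrix in \eqref{rrAdjacency} and $M$ for the incident matrix in its off-diagonal blocks. The first step exploits the block anti-diagonal form: for any real $M$, the symmetric matrix $\left[\begin{smallmatrix}0 & M\\ M^{T} & 0\end{smallmatrix}\right]$ has eigenvalues $\{\pm\sigma\}$ as $\sigma$ ranges over the singular values of $M$, with eigenvectors assembled from the left and right singular vectors. This is exactly the fact recalled just before the lemma. Hence, listing the eigenvalues of $A$ in decreasing order gives the singular values of $M$ in decreasing order followed by their negatives, so the largest eigenvalue of $A$ is the largest singular value $\sigma_{(1)}$ and the second largest eigenvalue is the second largest singular value $\sigma_{(2)}\ge 0$. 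The task thus collapses to identifying $\sigma_{(2)}$.

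Next I would invoke Lemma~\ref{lem4}, which lists the singular values as $\psi_m=\bigl|\sum_{l=0}^{d-1}e^{-j2\pi ml/n}\bigr|$, $m=0,\dots,n-1$. Summing the finite geometric series and extracting the half-angle phase gives $\psi_0=d$ and $\psi_m=|\sin(\pi md/n)|/\sin(\pi m/n)$ for $1\le m\le n-1$. The principal value is $\sigma_{(1)}=\psi_0=d$, since it is the modulus of a sum of $d$ aligned unit phasors, while for each $1\le m\le n-1$ the phasors $e^{-j2\pi ml/n}$ are not all equal, making the triangle inequality strict and $\psi_m<d$. Consequently $\sigma_{(2)}=\max_{1\le m\le n-1}\psi_m$, and it remains to prove this maximum equals $\psi_1$.

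This last step is the crux. I would first use the symmetry $\psi_m=\psi_{n-m}$ (immediate from $\sin(\pi(n-m)/n)=\sin(\pi m/n)$ and $\sin(\pi(n-m)d/n)=\pm\sin(\pi md/n)$) to restrict to $1\le m\le n/2$, and then write $\psi_m=|U_{d-1}(\cos(\pi m/n))|$, where $U_{d-1}$ is the Chebyshev polynomial of the second kind, using $U_{d-1}(\cos\theta)=\sin(d\theta)/\sin\theta$. On the main lobe $1\le m\le n/d$ we have $\pi md/n\le\pi$, so $\psi_m=U_{d-1}(\cos(\pi m/n))\ge 0$; since $U_{d-1}$ is increasing on $[\cos(\pi/d),1]$ and $\cos(\pi m/n)$ decreases in $m$, the $\psi_m$ are decreasing there and $\psi_m\le\psi_1$. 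Outside the main lobe I would use the envelope bound $\psi_m\le 1/\sin(\pi m/n)$, which falls below $\psi_1$ once $\sin(\pi m/n)$ is bounded away from $\sin(\pi/n)$. The main obstacle is making this side-lobe estimate uniform in $(n,d)$: when $d$ is comparable to $n$ many $\psi_m$ cluster just under $d$ (indeed every $\psi_m=1$ when $d=n-1$), and the crude envelope no longer separates them from $\psi_1$, so a sharper comparison of the Dirichlet-kernel samples is required in that regime. Once $\sigma_{(2)}=\psi_1$ is secured, evaluating $\psi_1=\bigl|(1-e^{-j2\pi d/n})/(1-e^{-j2\pi/n})\bigr|=\sin(\pi d/n)/\sin(\pi/n)$, with both sine arguments in $(0,\pi)$ and hence positive, yields the claimed closed form.
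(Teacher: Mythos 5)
Your reduction is exactly the paper's: the symmetric block anti\nobreakdash-diagonal form of \eqref{rrAdjacency} has eigenvalues $\pm\sigma$ as $\sigma$ ranges over the singular values of the incident matrix, and Lemma~\ref{lem4} (circulant structure plus normality) identifies those singular values as the Dirichlet-kernel samples $\psi_m$. Where you go further is in noticing that one must still prove $\psi_1=\max_{1\le m\le n-1}\psi_m$ before calling it the \emph{second largest} eigenvalue --- a step the paper's proof silently assumes by treating the list $\psi_0,\psi_1,\dots$ as if it were already sorted. Your main-lobe argument is sound: since $y\mapsto y\cot y$ is decreasing on $(0,\pi)$, the ratio $\sin(d\theta)/\sin\theta$ is decreasing on $(0,\pi/d]$, so every sample with $m\le n/d$ is dominated by the $m=1$ sample. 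The side-lobe regime you flag as problematic (when $d$ is comparable to $n$) can be closed as follows: for $m\ge1$ one has $|\sin(\pi m d/n)|=|\sin(\pi m(n-d)/n)|$, hence $\psi_m(d)=\psi_m(n-d)$, which reduces the whole question to $d\le n/2$; there, for $n/d\le m\le n/2$, the envelope gives $\psi_m\le 1/\sin(\pi m/n)\le 1/\sin(\pi/d)\le d/2$, while $\psi_1=\sin(\pi d/n)/\sin(\pi/n)\ge 2d/\pi>d/2$ using $\sin x\ge 2x/\pi$ on $[0,\pi/2]$ and $\sin(\pi/n)\le\pi/n$, and samples with $m>n/2$ follow from $\psi_m=\psi_{n-m}$. (For the paper's own parameter choice $n=d(d-1)+1$ one has $n\ge 2d$ for $d\ge 3$ in any case, and $d=2$, $n=3$ is a one-line check.) With that estimate inserted, your proposal is a complete proof of a claim the paper merely asserts.
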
{}

3) Finally, we conclude that the largest eigenvalue of (\ref{rrAdjacency}) is $d$, the spectral gap of the round-robin policy is 
    \begin{equation*}
        \epsilon^{round-robin}=d-\frac{\sin{d\pi/n}}{\sin{\pi/n}}.
    \end{equation*}{}

\end{document}